\newcommand{\ie}{\emph{i.e.}\xspace}
\newcommand{\eg}{\emph{e.g.}\xspace}
\begin{document}
\frenchspacing
\mainmatter  

\title{Strategyproof Quota Mechanisms for Multiple Assignment Problems}

\titlerunning{Strategyproof Quota Mechanisms for Multiple Assignment Problems}

%
%

\author{Hadi Hosseini\inst{1} \and Kate Larson\inst{2}}
\authorrunning{Hadi Hosseini \and Kate Larson}


\institute{Computer Science Department,
Carnegie Mellon University\\
\email{hhosseini@cmu.edu}	
\and	
	Cheriton School of Computer Science,
	University of Waterloo\\
	\email{klarson@uwaterloo.ca}
}

%

%
%

\toctitle{Lecture Notes in Computer Science}
\tocauthor{Authors' Instructions}
\maketitle

\begin{abstract}
We study the problem of allocating multiple objects to agents without transferable utilities, where each agent may receive more than one object according to a quota. 
Under lexicographic preferences, we characterize the set of strategyproof, non-bossy, and neutral quota mechanisms and show that under a mild Pareto efficiency condition, serial dictatorship quota mechanisms are the only mechanisms satisfying these properties. Dropping the neutrality requirement, this class of quota mechanisms further expands to sequential dictatorship quota mechanisms.
We then extend quota mechanisms to randomized settings, and show that the random serial dictatorship quota mechanisms (RSDQ) are envyfree, strategyproof, and ex post efficient for any number of agents and objects and any quota system, proving that the well-studied Random Serial Dictatorship (RSD) satisfies envyfreeness when preferences are lexicographic.
 
\end{abstract}

\section{Introduction}

We consider the problem of allocating indivisible objects to agents without any explicit market.
In many real-life domains such as course assignment, room assignment, school choice, medical resource allocation, etc. the use of monetary transfers or explicit markets are forbidden because of ethical and legal issues.
Much of the literature in this domain is concerned with designing incentive compatible mechanisms that incentivizes agents to reveal their preferences truthfully. Moreover, the criterion of Pareto efficiency along with strategyproofness provide stable solutions to such allocation problems.

We are interested in allocation problems where each agent may receive a set of objects and thus we search for mechanisms that satisfy some core axiomatic properties of strategyproofness, Pareto efficiency, and non-bossiness. 
Examples of such allocation problems include distributing inheritance among heirs\footnote{Here we only consider non-liquid assets that cannot be quickly or easily converted to transferable assets such as money.}, allocating multiple tasks to employees, assigning scientific equipment to researchers, assigning teaching assistants to different courses, and allocating players to sports teams. The common solution for allocating players to teams or allocating courses to students in the course assignment problem is the Draft mechanism~\cite{brams1979prisoners}, where agents choose one item in each picking round. However, allocation mechanisms, such as the Draft mechanism, have been shown to be highly manipulable in practice and fail to guarantee Pareto optimality~\cite{budish2012multi}.

Svensson~\cite{svensson1994queue,svensson1999strategy} formulated the \emph{standard assignment problem} (first proposed by Shapley and Scarf~\cite{shapley1974cores}) where each agent receives exactly one item, and showed that Serial Dictatorship mechanisms are the only social choice rules that satisfy Pareto efficiency, strategyproofness, non-bossiness, and neutrality. 
In contrast to the standard assignment problem, in the \emph{multiple assignment problem} agents may require bundles or sets of objects according to a predefined quota and might have various interesting preferences (\eg complements or substitutes) over these sets. However, the class of sequential dictatorships mechanisms no longer characterizes all non-bossy, Pareto efficient, and strategyproof social choice mechanisms.
To address this issue, P{\'a}pai~\cite{papai2000strategyproof} and Hatfield~\cite{hatfield2009strategy} studied the multiple assignment problem where objects are assigned to agents subject to a quota. P{\'a}pai~\cite{papai2000strategyproof} showed that under quantity-monotonic preferences every strategyproof, non-bossy, and Pareto efficient social choice mechanism is sequential; while generalizing to monotonic preferences, the class of such social choice functions gets restricted to \emph{quasi-dictatorial} mechanisms where every agent except the first dictator is limited to pick at most one object. P{\'a}pai's characterization is essentially a negative result and rules out the possibility of designing neutral, non-bossy, strategyproof, and Pareto efficient mechanisms that are not strongly dictatorial.
Hatfield \cite{hatfield2009strategy}, on the other hand, addressed this issue by assuming that all agents have precisely fixed and equal quotas, and showed that serial dictatorship is strategyproof, Pareto efficient, non-bossy, and neutral for responsive preferences.

Our work generalizes these results, for a subclass of preferences, by allowing any number of agents or objects, and assuming that individual agents' quotas can vary and be agent specific, imposing no restrictions on the problem size nor quota structures.
Instead, we are interested in expanding the possible quota mechanisms to a larger class, essentially enabling a social planner to choose any type of quota system based on a desired metric such as seniority. Our main focus is on the \emph{lexicographic preference} domain, where agents have idiosyncratic private preferences.

Lexicographic preferences \cite{fishburn1975axioms} have recently attracted attention among researchers in economics and computer science~\cite{schulman2012allocation,saban2014note,kohli2007representation}. 
In behavioral economics and psychology as well as consumer market research, there is evidence for the presence of lexicographic preferences among individuals such as breaking ties among equally valued alternatives~\cite{drolet2004rationalizing}, making purchasing decisions by consumers~\cite{colman1999singleton}, and examining public policies, job candidates, etc.~\cite{tversky1988contingent}. Choice and decisions are particularly tend to look more lexicographic in ordinal domains, thus, in ordinal mechanism design one must pay particular attention to the settings wherein agents may treat alternatives as non-substitutable goods.
%
%

%
Our main results in the lexicographic preference domain are the following:

\begin{itemize}
	\item We characterize the set of strategyproof, non-bossy, and neutral allocation mechanisms when there is a quota system. We show that serial dictatorships are the only mechanisms satisfying our required properties of strategyproofness, non-bossiness, Pareto efficiency, and neutrality. Allowing any quota system enables the social planner to remedy the inherent unfairness in deterministic allocation mechanisms by assigning quotas according to some fairness criteria (such as seniority, priority, etc.).
	
	\item We generalize our findings to \emph{randomized mechanisms} and show that random serial dictatorship quota mechanisms (RSDQ) satisfy strategyproofness, ex post efficiency, and envyfreeness in the domain of lexicographic preferences. Consequently, we prove that the well-known Random Serial Dictatorship (RSD) mechanism in standard assignment settings satisfies envyfreeness when preferences are lexicographic. 
	Thus, random quota mechanisms provide a rich and extended class for object allocation with no restriction on the market size nor quota structure while providing envyfreeness in the lexicographic domains, justifying the use of such mechanisms in many practical applications.
	%
	%
	
\end{itemize}

\section{Related Work}


In the \emph{standard assignment} problem (sometimes known as the house allocation problem), each agent is entitled to receive exactly one object from the market. P{\'a}pai~\cite{papai2000hierarchical} extended the standard model of Svensson~\cite{svensson1994queue,svensson1999strategy} to settings where there are potentially more objects than agents (each agent receiving at most one object) with a hierarchy of endowments, generalizing Gale's top trading cycle procedure. This result showed that the hierarchical exchange rules characterize the set of all Pareto efficient, group-strategyproof, and reallocation proof mechanisms.
%
%
In the \emph{multiple-assignment} problem, agents may receive sets of objects, and thus, might have various interesting preferences over the bundles of objects. P{\'a}pai~\cite{papai2001strategyproof} studied this problem on the domain of strict preferences allowing for complements and substitutes, and showed that sequential dictatorships are the only strategyproof, Pareto optimal, and non-bossy mechanisms. Ehlers and Klaus~\cite{ehlers2003coalitional} restricted attention to responsive and separable preferences and essentially proved that the same result persists even in a more restrictive setting. Furthermore, Ehlers and Klaus showed that considering resource monotonic allocation rules, where changing the available resources (objects) affects all agents similarly, limits the allocation mechanisms to serial dictatorships.
Hylland and Zeckhauser's pseudo-market design based on eliciting cardinal utilities~\cite{hylland1979efficient} and its deterministic counterpart based on competitive equilibrium from equal incomes (CEEI)
 provide efficient and envyfree solutions but are highly susceptible to manipulation. Zhou~\cite{zhou1990conjecture}, based on Gale's conjecture~\cite{gale1987college}, proved that there do not exist (randomized) allocation rules that satisfy symmetry, Pareto efficiency, and strategyproofness.

In the randomized settings, Random Serial Dictatorship (RSD) and Probabilistic Serial Rule (PS) are well-known for their prominent economic properties. RSD satisfies strategyproofness, ex post efficiency, and equal treatment of equals \cite{abdulkadirouglu1998random}, while PS is ordinally efficient and envyfree but not strategyproof~\cite{bogomolnaia2001new}.
For divisible objects, Schulman and Vazirani~\cite{schulman2012allocation} showed that if agents have lexicographic preferences, the Probabilistic Serial rule is strategyproof under strict conditions over the minimum available quantity of objects and the maximum demand request of agents. Under indivisible objects, these strict requirements translate to situations where the number of agents is greater than the number of objects and each agent receives at most one object. When allocating multiple objects to agents, Kojima~\cite{kojima2009random} obtained negative results on (weak) strategyproofness of PS in the general domain of preferences. Not only PS is not strategyproof, but the fraction of manipulable profiles quickly goes to one as the number of objects exceeds that of agents, even under lexicographic preferences~\cite{hosseini2016investigating}.
%
%
In contrast, we seek to find strategyproof and envyfree mechanisms with no restriction on the number of agents or objects under the lexicographic preference domain, addressing the open questions in ~\cite{papai2000strategyproof} and in~\cite{schulman2012allocation} about the existence of a mechanism with more favorable fairness and strategyproofness properties.

\section{The Model}

There is a set of $m$ indivisible objects $M = \{1,\ldots, m \}$ and a set of $n$ agents $N = \{1,\ldots, n \}$.
There is only one copy of each object available, and an agent may receive more than one object. 
Let $\mathbb{M} = \mathbb{P}(M)$ denote the power set of $M$.

Agents have private preferences over sets of objects. 
Let $\mathcal{P}$ denote the set of all complete and strict preferences over $\mathbb{M}$.
Each agent's preference is assumed to be a strict relation $\succ_{i}\in \mathcal{P}$. 
A \emph{preference profile} denotes a preference ordering for each agent and is written as $\succ = (\succ_{1},\ldots, \succ_{n}) \in \mathcal{P}^{n}$. Following the convention, $\succ_{-i} = (\succ_{1},\ldots, \succ_{i-1},\succ_{i+1},\ldots, \succ_{n}) \in \mathcal{P}^{n}$, and thus $\succ = (\succ_{i}, \succ_{-i})$.

An allocation is a $n \times m$ matrix $A \in \mathcal{A}$ that specifies a (possibly probabilistic) allocation of objects to agents. 
The vector $A_{i} = (A_{i,1}, \ldots, A_{i,m})$ denotes the allocation of agent $i$, that is, 
\begin{small}
	\begin{equation*}
		A = 
		\begin{pmatrix}
			A_{1} \\
			A_{2} \\
			\vdots \\
			A_{n} 
		\end{pmatrix}
		=
		\begin{pmatrix}
			A_{1,1} & A_{1,2} & \ldots & A_{1,m} \\
			A_{2,1} & A_{2,2} & \ldots & A_{2,m} \\
			\vdots & \vdots & \ddots & \vdots \\
			A_{n,1} & A_{n,2} & \ldots & A_{n,m} 
		\end{pmatrix}
	\end{equation*}
\end{small}
We sometimes abuse the notation and use $A_{i}$ to refer to the set of objects allocated to agent $i$.
Let $\mathcal{A}$ refer to the set of possible allocations. Allocation $A \in \mathcal{A}$ is said to be \textit{feasible} if and only if $\forall j\in M, \sum_{i\in N} A_{i,j} = \{0, 1\}$, no single object is assigned to more than one agent, while some objects may not be assigned. Note that we allow \emph{free disposal}, and therefore, $\bigcup_{i\in N} A_{i} \subseteq M$.
For two allocations we write $A_i \succ_i B_i$ if agent $i$ with preferences $\succ_{i}$ strictly prefers $A_{i}$ to $B_{i}$. Thus, $A_i \succeq_i B_i$ and $B_i \succeq_i A_i$ implies $A_{i} = B_{i}$.

Preference $\succ_{i}$ is \emph{lexicographic} if there exists an ordering of objects, $(a, b ,c , \ldots)$, such that for all $A, B \in \mathcal{A}$ if $a \in A_{i}$ and $a \notin B_{i}$ then $A_{i} \succ_{i} B_{i}$; if $b \in A_{i}$ and $a, b \notin B_{i}$ then $A_{i} \succ_{i} B_{i}$; and so on.
That is, the ranking of objects determines the ordering of the sets of objects in a lexicographic manner. Note that lexicographic preferences are \emph{responsive} and \emph{strongly monotonic}. A preference relation is \emph{responsive} if $A_{i} \bigcup B_{i} \succ_{i} A_{i} \bigcup B'_{i}$ if and only if $B_{i} \succ_{i} B'_{i}$. Strong monotonicity means that any set of objects is strictly preferred to all of its proper subsets. 
We make no further assumption over preference relations.

An allocation mechanism is a function $\pi: \mathcal{P}^{n} \to \mathcal{A}$, which assigns a feasible allocation to every preference profile. Thus, agent $i$'s allocation $A_{i}$ can also be represented as $\pi_{i}$.
An allocation mechanism assigns objects to agents according to a \emph{quota system} $q$, where $q_{i}$ is the quota of the $i$th dictator such that $\sum_{i= 1}^{n} q_{i} \leq m$. Since in our model not all agents need to be assigned an object, we use the size of quota $|q|$ to denote the number of agents that are assigned at least one object. Thus, we have $|q| \leq n$. 
From the revelation principle~\cite{dasgupta1979implementation}, we can restrict our analysis to direct mechanisms that ask agents to report their preferences to the mechanism directly.

\subsection{Properties}

In the context of deterministic assignments, an allocation $A$ \emph{Pareto dominates} another allocation $B$ at $\succ$ if $\exists i\in N$ such that $A_{i} \succ_{i} B_{i}$ and $\forall j\in N$ $A_{j} \succeq_{j} B_{j}$. An allocation is \emph{Pareto efficient} at $\succ$ if no other allocation exists that Pareto dominates it at $\succ$.
Since a social planner may decide to only assign $C \leq m$ number of objects, we need to slightly modify our efficiency definition.
We say that an allocation that assigns $C = \sum_{i=1}^{n} q_{i}$ objects is Pareto C-efficient if there exists no other allocation that assigns an equal number of objects, $C$, that makes at least one agent strictly better off without making any other agent worse off. A Pareto C-efficient allocation is also  Pareto efficient when $\sum_{i=1}^{n} q_{i} = m$.

\begin{definition} [Pareto C-efficiency]
	A mechanism $\pi$ with quota $q$, where $C=\sum_{i} q_{i}$, is Pareto C-efficient if for all $\succ\in \mathcal{P}^{n}$, there does not exist $A \in \mathcal{A}$ which assigns $C$ objects such that for all $i\in N$, $A_{i} \succeq_{i} \pi_{i}(\succ)$, and $A_{j} \succ_{j} \pi_{j}(\succ)$ for some $j\in N$.
\end{definition}

A mechanism is strategyproof if there exists no non-truthful preference ordering $\succ'_{i} \neq \succ_{i}$ that improves agent $i$'s  allocation. More formally, 
\begin{definition} [Strategyproofness]
	Mechanism $\pi$ is strategyproof if for all $\succ \in \mathcal{P}^{n}$, $i \in N$, and for any misreport $\succ'_{i} \in \mathcal{P}$, we have $\pi_{i}(\succ) \succeq_{i} \pi_{i}(\succ'_{i},\succ_{-i})$.
\end{definition}

Although strategyproofness ensures that no agent can benefit from misreporting preferences, it does not prevent an agent from reporting a preference that changes the prescribed allocation for some other agents while keeping her allocation unchanged. This property was first proposed by Satterthwaite and Sonnenschein~\cite{satterthwaite1981strategy}.
A mechanism is \emph{non-bossy} if an agent cannot change the allocation without changing the allocation for herself.
\begin{definition} [Non-bossiness]
	A mechanism is non-bossy if for all $\succ\in \mathcal{P}^{n}$ and agent $i\in N$, for all $\succ'_{i}$ such that $\pi_{i}(\succ) = \pi_{i}(\succ'_{i},\succ_{-i})$ we have $\pi(\succ) = \pi(\succ'_{i},\succ_{-i})$.
\end{definition}

Non-bossiness and strategyproofness only prevent certain types of manipulation; changing another agent's allocation or individually benefiting from a strategic report. However, it may still be possible for two or more agents to form a coalition and affect the final outcome, so that at least one of them improves her allocation ex post. This requirement is called group-strategyproofness, which precludes group manipulation as well as individual agent manipulation. 


Our last requirement is neutrality. 
Let $\phi: M \to M$ be a permutation of the objects.
For all $A\in\mathcal{A}$, let $\phi(A)$ be the set of objects in $A$ renamed according to $\phi$. Thus, $\phi(A) = (\phi(A_{1}), \ldots, \phi(A_{n}))$.
For each $\succ \in \mathcal{P}^{n}$ we also define $\phi(\succ) = (\phi(\succ_{1}), \ldots, \phi(\succ_{n}))$ as the preference profile where all objects are renamed according to $\phi$. 

\begin{definition} [Neutrality]
	A mechanism $\pi$ is neutral if for any permutation function $\phi$ and for all preference profiles $\succ\in\mathcal{P}^{n}$, $\phi(\pi(\succ)) = \pi(\phi(\succ))$.
\end{definition}

In other words, a mechanism is \emph{neutral} if it does not depend on the name of the objects, that is, changing the name of some objects results in a one-to-one identical change in the outcome. 
It is clear that above conditions reduce the set of possible mechanisms drastically.

\section{Allocation Mechanisms}

Several plausible multiple allocation mechanisms exploit interleaving picking orders to incorporate some level of fairness, where agents can take turns each time picking one or more objects  \cite{kohler1971class,brams2000win,brams2005efficient}. An interleaving mechanism alternates between agents, allowing a single agent to pick objects in various turns.
The \emph{interleaving mechanisms} have been widely used in many everyday life activities such as assigning students to courses, members to teams, and in allocating resources or moving turns in boardgames or sport games. To name a few, \emph{strict alternation} where agents pick objects in alternation (\eg 1212 and 123123) and \emph{balanced alternation} where the picking orders are mirrored (\eg agent orderings 1221 and 123321), and the well-known \emph{Draft mechanism}~\cite{budish2012multi,brams1979prisoners,brams2000win} that randomly chooses a priority ordering over $n$ agents and then alternates over the drawn priority ordering and its reverse sequence are the examples of such mechanisms.
However, all these interleaving mechanisms are highly manipulable in theory; computing optimal manipulations under interleaving mechanisms is shown to be easy only for two agents under additive and separable preferences and similarly for lexicographic preferences~\cite{DBLP:journals/corr/AzizBLM16}.
Extending to non-separable preferences, deciding a strategic picking strategy is NP-complete, even for two agents~\cite{bouveret2014manipulating}. Kalinowski et al.~\cite{kalinowski2013social} studied interleaving mechanisms (alternating policies) from a game-theoretical perspective and showed that under linear order preferences the underlying equilibrium in a two-person picking game is incentive compatible~\cite{kalinowski2013strategic}. Nonetheless, such interleaving mechanisms have been shown to be heavily manipulated in practice~\cite{budish2012multi}.

Before discussing the main characterization results, here we provide a formal statement for all interleaving mechanisms.
We generalize such allocation procedures to any mechanism with an interleaving order of agents with general preferences where at least one agent gets to choose twice, once before and once after one (or more) agents.  We note that all missing proofs can be found in the appendix.

\begin{theorem}\label{thm:noInterleaving}
	There exists no interleaving mechanism that satisfies Pareto C-efficiency, non-bossiness, and strategyproofness.
\end{theorem}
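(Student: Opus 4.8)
The plan is to proceed by contradiction: assume an interleaving mechanism $\pi$ is at once Pareto C-efficient, non-bossy, and strategyproof, and then produce a single preference profile on which one agent strictly gains by lying, contradicting strategyproofness. By the defining property of an interleaving mechanism there is an agent $i$ who takes a turn, then after one or more intervening turns an agent $j \neq i$ takes a turn, and then $i$ takes a turn again. I would first reduce to this minimal configuration by fixing every other agent's reported preference to rank some ``junk'' objects (ones that neither $i$ nor $j$ will ever contend for) at the top, so that the only interaction over the contested objects happens at the turn of $i$, the turn of $j$, and the later turn of $i$. This makes the argument independent of the length of the order and of the particular quota system, as long as $i$'s quota is at least two and $j$ takes a turn between $i$'s two turns.

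Next I would construct the critical profile over three contested objects $a, b, c$. I give $i$ the lexicographic preference headed by $a \succ b \succ c$ and give $j$ a lexicographic preference headed by $b \succ c \succ a$. Because $i$ chooses before $j$, the object $a$ is secured by $i$ at her first turn; when it is $j$'s turn her best still-available contested object is $b$, so $j$ secures $b$; and at $i$'s second turn the best remaining contested object is $c$, so $\pi_i = \{a,c\}$ and $j$ holds $\{b\}$. The role of Pareto C-efficiency and non-bossiness here is to guarantee that this is indeed the outcome: efficiency rules out leaving a contested object unassigned or handing it to an agent who ranks it lower (which would permit a Pareto improvement under the strongly monotonic, responsive lexicographic preferences), and non-bossiness guarantees that the allocations of $j$ and of the junk-holding agents are pinned down by their own choices and are insensitive to how $i$ reports, as long as $i$'s own bundle is fixed.

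Finally I would exhibit the manipulation. Let $i$ instead report the preference headed by $b \succ a \succ c$. Now $i$ secures $b$ at her first turn; this diverts $j$, whose next-best contested object is $c$, away from $b$, so $j$ takes $c$; and at $i$'s second turn $a$ is still available, giving $i$ the bundle $\{a,b\}$. Since $i$'s true preference is lexicographic with $a \succ b \succ c$, responsiveness gives $\{a,b\} = \{a\}\cup\{b\} \succ_i \{a\}\cup\{c\} = \{a,c\}$, so $i$ is strictly better off reporting falsely, contradicting strategyproofness. I expect the main obstacle to be the second step, namely showing that the three axioms, rather than an a-priori ``greedy'' assumption, genuinely pin the outcome of $\pi$ down to $(\{a,c\},\{b\})$ at the truthful profile and to $(\{a,b\},\{c\})$ at the deviation; the delicate point is that Pareto C-efficiency alone does not single out a unique allocation (for instance $(\{a,b\},\{c\})$ is itself efficient), so it is the interleaving turn order that forces $j$ to be diverted onto $b$ under truthful reporting, with the axioms used only to eliminate the remaining degrees of freedom.
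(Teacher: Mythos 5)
Your proposal is correct and takes essentially the same route as the paper's own proof: the same reduction to a single $i$--$j$--$i$ alternation with all other agents parked on non-conflicting objects, the same critical profile $\succ_i : a \succ b \succ c$ and $\succ_j : b \succ c \succ a$, and the same misreport $b \succ a \succ c$ upgrading $i$ from $\{a,c\}$ to $\{a,b\}$. The delicate point you flag (that Pareto C-efficiency and non-bossiness, together with the interleaving turn structure, must pin down the truthful and deviated outcomes) is treated at the same level of informality in the paper, which invokes exactly those two axioms to fix the allocations before exhibiting the manipulation.
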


Clearly, an \emph{imposed mechanism} that assigns a fixed allocation to every preference profile is strategyproof and non-bossy but does not satisfy Pareto C-efficiency~\cite{papai2001strategyproof}.\footnote{An imposed mechanism does not take agents' preferences into account and prescribes the same allocation to every preference profile.}
With these essentially negative results for interleaving mechanisms, we restrict our attention to the class of sequential dictatorship mechanisms, where each agent only gets one chance to pick (possibly more than one) objects. 

\subsection{Sequential Mechanisms}

Let $q$ denote a quota system such that $\sum_{i} q_{i} \leq m$.
In a sequential dictatorship mechanism with quota $q$, the first dictator chooses $q_{1}$ of her most preferred objects; the second dictator is chosen  depending on the set of objects allocated to the first dictator. The second dictator then chooses $q_{2}$ objects of her most preferred objects among the remaining objects. This procedure continues, where the choice of the next dictator may be determined based on the earlier allocations, until no object or no agent is left.


Let $f$ be a function that, given a partial allocation of objects to some agents, returns the next dictator. Then, $f_{i}(\cdot) = j$ means that agent $j$ is ranked $i$th in the ordering of dictators. There exists an agent $f_{1}$ (first dictator) for each preference profile $\succ\in\mathcal{M}$, and an ordering of the remaining dictators such that the $i$th dictator is identified recursively by 
\begin{equation*}
	f_{i}(\pi_{f_{1}}(\succ), \ldots, \pi_{f_{i-1}}(\succ))
\end{equation*}
In other words, the choice of the next dictator only depends on the previous dictators and their allocation sets and \textit{does not} depend on the preferences of the previous dictators.
The following example shows why the choice of dictator should not depend on previous dictators' preferences.
\begin{example}
	Assume three agents and four objects with $q=(2,1,1)$ and consider the following rule for identifying the order of the dictators: if the first dictator's preference is $a \succ b \succ c \succ d$ then the ordering of other agents is (2,3), otherwise the order is (3,2). Now if agent 2 and 3 have identical preferences as agent 1, then agent 1 can simply change agent 2 and 3's allocations by misrepresenting her preference as $\hat{\succ}_{1}: b \succ a \succ c \succ d$ while her allocation remains unchanged. Thus, this sequential dictatorship mechanism is bossy even though it satisfies Pareto efficiency and strategyproofness.
\end{example}

\begin{definition} [Sequential Dictatorship]
	Let $\mathbb{M}_{k} = \mathbb{P}_{\leq k}(M)$ be the set of subsets of $M$ of cardinality less than or equal $k$. 
	An allocation mechanism $\pi: \mathcal{P}^{n} \to \mathcal{A}$ is a sequential dictatorship quota mechanism if there exists a quota system $q$ and an ordering $f$ such that for all $\succ\in \mathcal{P}^{n}$, 
	\begin{align*}
		\pi_{f_{1}}(\succ) = & \{ Z \in \mathbb{M}_{q_{1}} | Z \succ_{1} Z' ~\mbox{for all}~ Z'\in \mathbb{M}_{q_{1}} \}\\
		\pi_{f_{i}(\pi_{f_{1}},\ldots, \pi_{f_{i-1}})} (\succ) = & \{ Z \in \mathbb{M}_{q_{i}} \setminus \bigcup_{j=1}^{j=i-1} \pi_{f_{j}}(\succ) | Z \succ_{f_{i}} Z' ~\mbox{for all}~ |Z'| = |q_{i}| \}
	\end{align*}
\end{definition}
A \textbf{serial dictatorship mechanism }is an example of a sequential mechanism where the ordering is a permutation of the agents, determined a priori, that is, 
for all $\succ\in \mathcal{P}^{n}$, $\pi_{f(\cdot)}(\succ) =  \pi_{f}(\succ)$. Such mechanisms satisfy neutrality.
From now on, we simply use the vector $f$ instead of $f(\cdot)$ when the ordering is predefined independent of the choice of objects.





\section{Serial Dictatorship Quota Mechanisms}

In this section, we study serial dictatorship mechanisms for quota allocations and characterize the set of strategyproof, non-bossy, neutral, and Pareto efficient mechanisms subject to various quota systems.

When allocating objects sequentially via a quota system $q$, Pareto C-efficiency requires that no two agents be envious of each others' allocations since then they can simply exchange objects ex post, implying that the initial allocation is dominated by the new allocation after the exchange. For example, take a serial dictatorship with $q_{1} = 1$ and $q_{2} = 2$ and three objects. Agent $1$ will receive her top choice object $\{a\}$ (since $\{a\} \succ_{1} \{b\} \succ_{1} \{c\}$) according to her preference and agent 2 receives $\{b,c\}$. However, it may be the case that $\{b,c\} \succ_{1} \{ a \}$ while $\{ a \} \succ_{2} \{b,c\}$ and both agents may be better off exchanging their allocations. Thus, we have the following proposition for general preferences.

\begin{proposition}\label{prop:pareto}
	For general preferences, sequential (and serial) dictatorship quota mechanisms do not guarantee Pareto C-efficiency.
\end{proposition}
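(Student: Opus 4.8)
The plan is to refute Pareto C-efficiency by an explicit counterexample rather than a general argument. Since a serial dictatorship is a special case of a sequential dictatorship, it suffices to exhibit a single instance — a quota system and a preference profile — on which the serial dictatorship quota mechanism returns an allocation that is Pareto dominated by another allocation assigning the same number of objects; this one instance simultaneously refutes the claim for both classes. I would reuse the instance already sketched in the paragraph preceding the statement.

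Concretely, I would take $m=3$ objects $\{a,b,c\}$, $n=2$ agents, and the quota system $q=(1,2)$, so that $C=q_1+q_2=3=m$ and Pareto C-efficiency coincides with ordinary Pareto efficiency. The essential move is to choose a \emph{general} preference for agent $1$ in which her best singleton and her best pair disagree: let the singletons be ranked $\{a\}\succ_1\{b\}\succ_1\{c\}$ while the pair satisfies $\{b,c\}\succ_1\{a\}$. This is admissible precisely because the general domain imposes no responsiveness or monotonicity-style structure linking singletons to the bundles that contain them. For agent $2$ I would take $\{a\}\succ_2\{b,c\}$.

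Next I would trace the mechanism against the definition of the sequential dictatorship quota mechanism. The first dictator (agent $1$) has $q_1=1$, so she is restricted to $\mathbb{M}_{q_1}$, the subsets of size at most $1$, and therefore selects her best singleton $\{a\}$; her true preference for the pair $\{b,c\}$ cannot be expressed under the quota. The second dictator (agent $2$) then takes the best remaining bundle of size $2$, namely $\{b,c\}$. Hence $\pi_1(\succ)=\{a\}$ and $\pi_2(\succ)=\{b,c\}$. I would then exhibit the dominating allocation $A$ with $A_1=\{b,c\}$ and $A_2=\{a\}$, which also assigns all $C=3$ objects, and verify $A_1=\{b,c\}\succ_1\{a\}=\pi_1(\succ)$ together with $A_2=\{a\}\succ_2\{b,c\}=\pi_2(\succ)$, so both agents are strictly better off and $A$ Pareto dominates $\pi(\succ)$ among allocations assigning $C$ objects.

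There is no real obstacle here, as this is a counterexample; the only point requiring care is conceptual rather than computational. One must confirm that the chosen $\succ_1$ is a legitimate member of $\mathcal{P}$ (a complete strict order over $\mathbb{M}$) and explicitly observe that the source of the inefficiency is structural: the quota $q_1=1$ forces the first dictator to commit to a single object even when she would prefer a larger bundle she is never offered, so no greedy per-dictator choice can in general recover the Pareto-efficient exchange. I would close by remarking that this is exactly what motivates restricting attention to the lexicographic domain in the sequel, where such a disagreement between best singletons and best bundles cannot arise.
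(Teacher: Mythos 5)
Your proposal is correct and takes essentially the same approach as the paper: the paper's justification for this proposition is exactly the counterexample sketched in the paragraph preceding the statement (quota $q=(1,2)$ over three objects, agent $1$ receiving $\{a\}$ despite $\{b,c\}\succ_{1}\{a\}$, agent $2$ receiving $\{b,c\}$ despite $\{a\}\succ_{2}\{b,c\}$, so the swap Pareto dominates). Your write-up simply makes explicit the details the paper leaves implicit, such as admissibility of $\succ_{1}$ in the general domain and that the dominating allocation assigns the same number $C$ of objects.
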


In the absence of Pareto C-efficiency in the domain of general preferences, a social planner is restricted to use only one type of quota system; either assigning at most one object to all agents except the first dictator (who receives the remaining objects), or setting equal quotas for all agents~\cite{papai2000strategyproof,hatfield2009strategy}.

Due to the impossibility shown in Proposition~\ref{prop:pareto},  we restrict ourselves to the interesting class of lexicographic preferences. We show that if preferences are lexicographic, regardless of the selected quota system, any serial dictatorship mechanism guarantees Pareto C-efficiency. We first provide the following lemma in the lexicographic domain.

\begin{lemma}\label{Lem:lex}
	The following statements hold for two sets of objects when preferences are lexicographic:
	\begin{itemize}
		\item[-] If $B_{i} \subset A_{i}$ then $A_{i} \succ_{i} B_{i}$.
		\item[-] For all $X$ such that $X \cap A_{i} = \emptyset$, we have $A_{i} \succ_{i} B_{i}$ iff $A_{i} \cup X \succ_{i} B_{i} \cup X$.
		
		\item[-] If $B_{i} \not\subset A_{i}$ and $A_{i} \succ_{i} B_{i}$ then there exists an object $x\in A_{i}$ such that $x \succ_{i} X$ for all $X \in  \mathbb{P}(B_{i} - A_{i})$.
	\end{itemize}
\end{lemma}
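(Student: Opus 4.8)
The plan is to prove each of the three statements directly from the defining property of lexicographic preferences, namely the existence of an ordering $(a, b, c, \ldots)$ of the objects that determines how sets are compared. Throughout, I fix agent $i$ and write the underlying object ranking as $o_1 \succ_i o_2 \succ_i \cdots \succ_i o_m$, so that comparing two sets amounts to scanning down this list and finding the first position where the two sets differ in membership.

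\medskip

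\noindent\textbf{First statement.} Suppose $B_i \subset A_i$ properly. Then there is some object $x \in A_i \setminus B_i$. Let $o_k$ be the highest-ranked object in $A_i \setminus B_i$. Every object ranked above $o_k$ is either in both sets or in neither (it cannot lie in $B_i \setminus A_i$ since $B_i \subseteq A_i$), so $A_i$ and $B_i$ agree on all objects ranked strictly above $o_k$. At position $o_k$ we have $o_k \in A_i$ but $o_k \notin B_i$, so by the lexicographic definition $A_i \succ_i B_i$. This is essentially immediate from strong monotonicity, already noted in the model section.

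\medskip

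\noindent\textbf{Second statement.} Fix $X$ with $X \cap A_i = \emptyset$. I would argue by comparing the two scans. Consider scanning down the object ranking and comparing $A_i$ with $B_i$; let $o_k$ be the first position of disagreement, which exists iff $A_i \neq B_i$. The key observation is that adjoining the same set $X$ (disjoint from $A_i$) to both sides does not change the outcome of the comparison at the first disagreeing coordinate: for any object $o$ ranked strictly above $o_k$, $o \in A_i \cup X \iff o \in B_i \cup X$, because $A_i$ and $B_i$ agree there and adding $X$ adds the same members to both. At position $o_k$ itself, the membership in $A_i \cup X$ versus $B_i \cup X$ is still determined by $A_i$ versus $B_i$ (again since $X$ contributes identically), so the first disagreement between $A_i \cup X$ and $B_i \cup X$ occurs no later than $o_k$ and agrees in direction with the original. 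Running this in both directions gives the ``iff.'' One subtlety to handle carefully is the position where the very first disagreement of $A_i \cup X$ vs.\ $B_i \cup X$ could in principle be \emph{earlier} than $o_k$; but since $A_i$ and $B_i$ agree above $o_k$ and $X$ adds identical elements, no new disagreement is created, so the first disagreement is exactly at $o_k$ and in the same direction. This equivalence is really just responsiveness restated, but I would give the scan argument to keep things self-contained.

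\medskip

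\noindent\textbf{Third statement.} Assume $B_i \not\subset A_i$ and $A_i \succ_i B_i$. Since $A_i \succ_i B_i$, the first object on which the two sets disagree, call it $o_k$, must satisfy $o_k \in A_i$ and $o_k \notin B_i$ (otherwise $B_i \succ_i A_i$). Take $x = o_k \in A_i$. I claim $x \succ_i Y$ for every $Y \in \mathbb{P}(B_i - A_i)$, where I read $B_i - A_i$ as the set-difference $B_i \setminus A_i$ and $\{x\} \succ_i Y$ in the induced lexicographic comparison of singletons against subsets. Every object in $B_i \setminus A_i$ is ranked strictly below $o_k$: indeed, any object ranked above $o_k$ lies in both $A_i$ and $B_i$ or in neither (since $o_k$ is the first disagreement), hence cannot be in $B_i \setminus A_i$. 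Therefore every element of any $Y \subseteq B_i \setminus A_i$ is ranked below $x = o_k$, and comparing $\{x\}$ to $Y$ lexicographically, the first coordinate of disagreement is $o_k$, which belongs to $\{x\}$ but not to $Y$; hence $\{x\} \succ_i Y$.

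\medskip

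\noindent\textbf{Main obstacle.} I do not expect a genuine mathematical obstacle here, since all three facts follow from unwinding the lexicographic definition via a first-point-of-disagreement argument. The real care is notational: the third bullet uses the slightly ambiguous notation $x \succ_i X$ comparing an object to a set and writes $B_i - A_i$ for set difference and $\mathbb{P}(\cdot)$ for the power set, so I would pin down precisely what comparison is meant (singleton versus arbitrary subset of $B_i \setminus A_i$) and verify the boundary case $Y = \emptyset$ separately, where $\{x\} \succ_i \emptyset$ holds by the first statement. The bulk of the work is simply making the ``first disagreeing object'' formalism airtight so that each claim reduces to reading off membership at that coordinate.
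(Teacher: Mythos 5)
Your first and third bullets are correct, and your direct first-point-of-disagreement arguments are if anything cleaner than the paper's: the paper derives the third bullet indirectly, by adding and then subtracting $X = B_{i} - A_{i}$ via the second bullet together with strong monotonicity, whereas you read the witness $x$ off immediately as the first disagreeing object and observe that everything in $B_{i} \setminus A_{i}$ sits below it. The genuine gap is in the converse direction of the second bullet. Your key claim --- that since $X$ adds identical elements to both sides, ``no new disagreement is created, so the first disagreement is exactly at $o_{k}$ and in the same direction'' --- is only half true: adjoining $X$ cannot create a disagreement above $o_{k}$, but it \emph{can erase} the disagreement at $o_{k}$ itself. Since $X$ is only assumed disjoint from $A_{i}$, nothing prevents $o_{k} \in X$ when $o_{k} \in B_{i} \setminus A_{i}$; in that case $o_{k}$ enters $A_{i} \cup X$ while contributing nothing new to $B_{i} \cup X$, the comparison can flip, and ``running this in both directions'' breaks down. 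Concretely, with ranking $a \succ_{i} b$, take $A_{i} = \{b\}$, $B_{i} = \{a\}$, $X = \{a\}$: then $X \cap A_{i} = \emptyset$ and $A_{i} \cup X = \{a,b\} \succ_{i} \{a\} = B_{i} \cup X$, yet $B_{i} \succ_{i} A_{i}$. So the ``iff'' as literally stated is false, and no scan argument can close it; the converse requires the additional hypothesis $X \cap B_{i} = \emptyset$, under which your argument goes through verbatim, since then $o_{k} \notin X$ in both cases.

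Two mitigating remarks. First, everywhere the paper actually uses this bullet (in its proof of the third bullet and in Proposition 2) only the forward implication is invoked, with $X = B_{i} - A_{i}$, and your scan argument does establish the forward direction under the stated hypothesis $X \cap A_{i} = \emptyset$ alone: there $A_{i} \succ_{i} B_{i}$ forces $o_{k} \in A_{i}$, hence $o_{k} \notin X$, and the disagreement at $o_{k}$ survives. Second, the paper's own proof of the converse has the same soft spot --- it supposes $B_{i} \succeq_{i} A_{i}$ and adds $X = B_{i} - A_{i}$ to both sides, i.e., it applies the forward direction with an $X$ that is not disjoint from the preferred set $B_{i}$ --- so the defect is partly inherited from the statement itself rather than introduced by you. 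For your write-up: prove the forward implication exactly as you do, and either strengthen the disjointness hypothesis to $X \cap (A_{i} \cup B_{i}) = \emptyset$ for the biconditional, or state only the forward direction, which is all that the rest of the paper needs.
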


\begin{proposition}\label{prop:ParetoEfficient}
	If preferences are lexicographic, the serial dictatorship quota mechanism is Pareto C-efficient.
\end{proposition}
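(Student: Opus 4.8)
The plan is to argue by contradiction, using strong induction on the number of dictators $|q|$. Suppose $\pi$ is the serial dictatorship quota allocation for a profile $\succ$ and that some feasible $A$ assigning $C=\sum_i q_i$ objects Pareto dominates it, so $A_i \succeq_i \pi_i(\succ)$ for every $i$ with a strict preference for at least one agent. Writing $f_1, f_2, \ldots$ for the fixed ordering, let $S_1 = \pi_{f_1}(\succ)$ be the set of $f_1$'s top $q_1$ objects. The goal of the first stage is to pin down $A_{f_1}=S_1$, after which the remaining agents form a strictly smaller serial dictatorship on $M' = M \setminus S_1$ with quotas $(q_2,q_3,\ldots)$ and total $C' = C - q_1$, to which the induction hypothesis applies (the reduced serial dictatorship coincides with $\pi$ restricted to $f_2,f_3,\ldots$ by the recursive definition).

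First I would establish that any set weakly preferred to $S_1$ must contain $S_1$: since $\succ_{f_1}$ is lexicographic and $S_1$ consists of $f_1$'s $q_1$ highest-ranked objects, if any object of $S_1$ were missing from a set $T$ the lexicographic comparison would be decided in favour of $S_1$ at that object, so $A_{f_1}\succeq_{f_1} S_1$ forces $S_1 \subseteq A_{f_1}$. \textbf{The main obstacle} is that $A$ need only assign $C$ objects in total and need not respect per-agent quotas, so a priori $f_1$ could hold objects beyond $S_1$ at the expense of later agents. I would rule this out as follows. Suppose $A_{f_1}=S_1\cup X$ with $X\neq\emptyset$ and $X\cap S_1=\emptyset$. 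Restricting $A$ to $f_2,f_3,\ldots$ gives a feasible allocation on $M'$ (disjoint from $S_1$ since $A$ is feasible and $S_1\subseteq A_{f_1}$) using $C'-|X|$ objects. Padding $f_2$ with $|X|$ arbitrary unused objects of $M'$ (available since $|M'|=m-q_1\geq C'$) yields an allocation $A''$ on the reduced problem using exactly $C'$ objects, with $A''_{f_i}\succeq_{f_i}\pi_{f_i}(\succ)$ for all $i\geq 2$ by strong monotonicity (Lemma~\ref{Lem:lex}). By the induction hypothesis the reduced serial dictatorship is Pareto $C'$-efficient, so weak domination forces $A''_{f_i}=\pi_{f_i}(\succ)$ for every $i\geq 2$. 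But then $A_{f_2}=\pi_{f_2}(\succ)\setminus X\subsetneq \pi_{f_2}(\succ)$, which by the first bullet of Lemma~\ref{Lem:lex} gives $A_{f_2}\prec_{f_2}\pi_{f_2}(\succ)$, contradicting $A_{f_2}\succeq_{f_2}\pi_{f_2}(\succ)$. Hence $X=\emptyset$ and $A_{f_1}=S_1$.

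With $A_{f_1}=\pi_{f_1}(\succ)$ established, the restriction of $A$ to $f_2,f_3,\ldots$ is a feasible allocation on $M'$ using exactly $C'$ objects and weakly preferred by every remaining agent to their serial dictatorship bundle. Applying the induction hypothesis (Pareto $C'$-efficiency of the reduced mechanism) forces $A_{f_i}=\pi_{f_i}(\succ)$ for all $i\geq 2$ as well, so $A=\pi(\succ)$, contradicting the assumption that $A$ makes some agent strictly better off. The base case $|q|=1$ is immediate: $A_{f_1}\succeq_{f_1}S_1$ forces $S_1\subseteq A_{f_1}$, and since the total budget is exactly $q_1=|S_1|$, equality holds and no objects remain for the other agents, whence $A=\pi(\succ)$. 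I expect the entire difficulty of the argument to be concentrated in the quota-mismatch step above; once that is handled, everything reduces cleanly to the lexicographic monotonicity facts of Lemma~\ref{Lem:lex} and the inductive peeling of dictators.
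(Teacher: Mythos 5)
Your proof is correct, but it takes a genuinely different route from the paper. The paper's proof is non-inductive: it fixes a dominating allocation $B$, reduces to the case $C'=C$ by a quota-counting argument, and then splits per agent on bundle sizes --- Case I ($|B_i|\leq |A_i|$) is dispatched via the third bullet of Lemma~\ref{Lem:lex} (the top object of $B_i\setminus A_i$ would have been picked by $i$ at her turn), while Case II ($|B_i|>|A_i|$) requires constructing an auxiliary preference profile $\succ'$ that ranks $B_j$ above $A_j$ and re-running the serial dictatorship on it to extract a contradiction. You instead peel off the first dictator by induction on $|q|$: the observation that $A_{f_1}\succeq_{f_1} S_1$ forces $S_1\subseteq A_{f_1}$ (since every object outside $S_1$ ranks below every object of $S_1$), together with the padding argument, isolates the entire difficulty of per-agent quota violations in one step, after which the inductive hypothesis on the reduced market $M'=M\setminus S_1$ with quotas $(q_2,\ldots)$ finishes the job; you need only the containment fact and the first bullet of Lemma~\ref{Lem:lex}, not bullet three nor any auxiliary profile. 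Your route is arguably tighter than the paper's Case II, which is the sketchiest part of their argument. Two small remarks: in the $X\neq\emptyset$ branch you write $A_{f_2}=\pi_{f_2}(\succ)\setminus X$, which presumes the padding set is $X$ itself; in general the padding $P$ may differ from $X$ and may even overlap $\pi_{f_2}(\succ)$, but the conclusion survives by cardinality alone, since $A_{f_2}\subseteq A''_{f_2}=\pi_{f_2}(\succ)$ with $|A_{f_2}|=q_2-|X|<q_2$ gives the proper inclusion directly. Moreover, that branch admits a shortcut: nonempty padding makes $A''_{f_2}\succ_{f_2} A_{f_2}\succeq_{f_2}\pi_{f_2}(\succ)$ by strong monotonicity, so $A''$ strictly improves $f_2$ and contradicts the inductive hypothesis immediately, without the equality-forcing step. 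Finally, make explicit that agents outside the $|q|$ dictators are carried along in the reduced problem: they receive $\emptyset$ under $\pi$, any bundle is weakly preferred to $\emptyset$, and your inductive conclusion forces their bundles to $\emptyset$ as well --- this is implicit in your ``remaining agents'' but deserves a sentence.
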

\begin{proof}
	Consider a mechanism $\pi$ with quota $q$, that assigns $C = \sum_{i} q_{i}$ objects.
	Suppose for contradiction that there exists an allocation $B$ with arbitrary quota $q'$, where $C' = \sum_{i} q'_{i}$, that Pareto dominates $A = \pi(\succ)$. We assume $C' = C$ to ensure that both allocations assign equal number of objects (Otherwise by strong monotonicity of lexicographic preferences and Lemma \ref{Lem:lex} one can assign more objects to strictly improve some agents' allocations.).
	
	Thus, for all agents $j\in N$, $B_{j} \succeq_{j} A_{j}$, and there exist some agent $i$ where $B_{i} \succ_{i} A_{i}$. 
	If for all $j\in N$, $|B_{j}| \geq |A_{j}|$ then $q'_{j} \geq q_{j}$. Now suppose for some $i$, $|B_{i}| > |A_{i}|$. This implies that $q'_{i} > q_{i}$. By adding these inequalities for all agents we have $\sum_{i} q'_{i} > \sum_{i} q_{i}$, contradicting the initial assumption of equal quota sizes ($C' = C$).
	
	For the rest of the proof, we consider two cases; one where the size of $B_{i}$ is greater than that of $A_{i}$, \ie, $|B_{i}| > |A_{i}|$, and one where $|B_{i}| \leq |A_{i}|$.
	
	\textbf{Case I}: Consider $|B_{i}| \leq |A_{i}|$ and $B_{i} \succ_{i} A_{i}$. If $B_{i} \subset A_{i}$ then monotonicity of lexicographic preferences in Lemma~\ref{Lem:lex} implies that $A_{i} \succ_{i} B_{i}$ contradicting the assumption. 
	On the other hand, if $B_{i} \not\subset A_{i}$ by Lemma~\ref{Lem:lex} there exists an object $x \in B_{i}$ such that for all $X \in  \mathbb{P}(B_{i} - A_{i})$ agent $i$ ranks it higher than any other subset, that is, $x \succ_{i} X$. In this case, serial dictatorship must also assign $x$ to agent $i$ in $A_{i}$, which is a contradiction.
	
	
	\textbf{Case II}: Consider $|B_{i}| > |A_{i}|$ and $B_{i} \succ_{i} A_{i}$. The proof of this case heavily relies on the lexicographic nature of preferences (as opposed to Case I that held valid for the class of monotonic, and not necessarily lexicographic, preferences).
	The inequality $|B_{i}| > |A_{i}|$ indicates that $q'_{i} > q_{i}$. We construct a preference profile $\succ'$ as follows: for each $j\in N$, if $B_{j} = A_{j}$ then $\succ'_{j} = \succ_{j}$, otherwise if $B_{j} \neq A_{j}$ rank the set $B_{j}$ higher than $A_{j}$ in $\succ'_{j}$ ($\succ'_{j} = B_{j} \succ A_{j} \succ \ldots$).
	Now run the serial dictatorship on $\succ'$ with quota $q$. Suppose that $B' = \pi(\succ')$. For agent $i$, $B'_{i}$ is the top $q_{i}$ objects of $B_{i}$ where $B'_{i} \subsetneq B_{i}$ and because $q_{i}$ is fixed, then $|B'_{i}| = |A_{i}|$. Given $\succ'$ we have $B_{i} \neq A_{i}$, which implies that $B'_{i} \neq A_{i}$. By strong monotonicity for agent $i$ we have $B_{i} \succ_{i} B'_{i} \succ_{i} A_{i}$. However, according to the constructed quotas we have $|B_{i}| > |B'_{i}|$ but $|B'_{i}| =  |A_{i}|$, where $B'_{i} \neq A_{i}$. By Lemma~\ref{Lem:lex} there exists an object $x \in B'_{i}$ which is preferred to all proper subsets of $A_{i} - B_{i}$. However, if such object exists it should have been picked by agent $i$ in the first place, which is in contradiction with agent $i$'s preference. \qed
\end{proof}

We state a few preliminary lemmas before proving our main result in characterizing the set of non-bossy, Pareto C-efficient, neutral, and strategyproof mechanisms. 
Given a non-bossy and strategyproof mechanism, an agent's allocation is only affected by her predecessor dictators. Thus, an agent's allocation may only change if the preferences of one (or more) agent with higher priority changes.

\begin{lemma}\label{lem:indep}
	Take any non-bossy and strategyproof mechanism $\pi$. Given two preference profiles $\succ, \succ' \in \mathcal{P}^{n}$ where $\succ = (\succ_{i}, \succ_{-i})$ and $\succ' = (\succ_{i}, \succ'_{-i})$, if for all $j < i$ we have $\pi_{f_{j}} (\succ) = \pi_{f_{j}}(\succ')$, then $\pi_{f_{i}} (\succ) = \pi_{f_{i}}(\succ')$.
\end{lemma}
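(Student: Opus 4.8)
The plan is to prove the statement for an \emph{arbitrary} strategyproof and non-bossy $\pi$, without invoking any greedy-picking or dictatorial structure, by turning the simultaneous change of all of $\succ_{-i}$ into a sequence of single-agent changes and leaning on two elementary consequences of the axioms. The first is the contrapositive of non-bossiness: if two profiles differ only in a single agent $k$'s report and $\pi_{k}$ agrees on both, then the entire allocation agrees. The second is a \emph{top-ranking invariance} extracted from strategyproofness: if $A=\pi(\succ)$ and $\succ'_{k}$ is any preference ranking $A_{k}$ first, then $\pi_{k}(\succ'_{k},\succ_{-k})=A_{k}$, and hence by non-bossiness $\pi(\succ'_{k},\succ_{-k})=A$. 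Indeed, strategyproofness with true preference $\succ_{k}$ gives $A_{k}\succeq_{k}\pi_{k}(\succ'_{k},\succ_{-k})$, while with true preference $\succ'_{k}$ it gives $\pi_{k}(\succ'_{k},\succ_{-k})\succeq'_{k}A_{k}$; since $A_{k}$ is $\succ'_{k}$-maximal and preferences are strict, the two bundles coincide.

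I would then use this invariance to \emph{align} the two profiles on the first $i-1$ dictators. After relabelling so that the ordering is $f_{j}=j$, write $A=\pi(\succ)$ and $A'=\pi(\succ')$; the hypothesis is exactly $A_{f_{j}}=A'_{f_{j}}$ for $j<i$. Applying the invariance one agent at a time, move $\succ$ to a profile $\succ^{*}$ in which each early dictator $f_{j}$ ($j<i$) reports a fixed canonical preference with $A_{f_{j}}$ on top, each later dictator $f_{k}$ ($k>i$) reports a canonical preference with $A_{f_{k}}$ on top, and $f_{i}$ keeps her original (common) report $\succ_{f_{i}}$; each step preserves the whole allocation, so $\pi(\succ^{*})=A$. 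Symmetrically move $\succ'$ to $\succ^{**}$ by top-ranking $A'_{f_{j}}$ and $A'_{f_{k}}$ while keeping $f_{i}$ fixed, giving $\pi(\succ^{**})=A'$. Because $A_{f_{j}}=A'_{f_{j}}$ for $j<i$, the \emph{same} canonical report can be used for every early dictator in both constructions, and $f_{i}$'s report is common as well. Hence $\succ^{*}$ and $\succ^{**}$ agree on $f_{1},\dots,f_{i}$ and differ only on the later dictators, and the target $A_{f_{i}}=A'_{f_{i}}$ becomes $\pi_{f_{i}}(\succ^{*})=\pi_{f_{i}}(\succ^{**})$.

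It then remains to show that changing only the later dictators' reports leaves $\pi_{f_{i}}$ fixed. I would reduce this, by induction on the number of positions $k>i$ at which $\succ^{*}$ and $\succ^{**}$ differ, to the case of two profiles differing in exactly one report $\succ_{f_{k}}$ with $k>i$. If $f_{k}$'s own allocation is unchanged, the non-bossiness tool finishes immediately; the only surviving case is that $f_{k}$'s allocation changes (whence strategyproofness forces the two bundles to be ranked oppositely under her two reports).

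This residual case is the \textbf{main obstacle}: I must argue that a strictly later dictator cannot perturb a strictly earlier dictator's bundle. I expect to close it by an induction on $i$ combined with strategyproofness evaluated at the canonical ``fixed point'' built above, where every agent top-ranks and receives her own bundle. With the first $i-1$ dictators pinned to their common canonical reports, so that the objects $\bigcup_{j<i}A_{f_{j}}$ are fixed, and with $f_{i}$ also top-ranking $A_{f_{i}}$ by the invariance, a deviation by $f_{k}$ can only (weakly) worsen $f_{k}$'s own bundle, while $f_{i}$ already holds her top bundle; I would then derive a contradiction by showing that any induced change in $\pi_{f_{i}}$ would, through non-bossiness and an exchange against $f_{k}$'s deviation, let $f_{k}$ secure a strictly preferred bundle, violating strategyproofness. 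This is precisely the step that encodes the priority ordering implicit in a strategyproof and non-bossy rule, and it is where the argument is most delicate; by contrast, the alignment reduction above is a routine application of the two single-agent invariances.
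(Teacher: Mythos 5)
Your alignment machinery is fine as far as it goes: the top-ranking invariance is a correct consequence of strategyproofness plus non-bossiness, and the one-agent-at-a-time reduction is essentially the paper's Lemma~\ref{lem:equal} in disguise. (One domain caveat: in the lexicographic domain, where Theorem~\ref{thm:serialDictatorships} actually lives, no admissible preference ranks a proper subset $A_{k} \subsetneq M$ first, since strong monotonicity forces $M$ on top; so your ``canonical preference with $A_{k}$ on top'' does not exist there, and you would have to replace it by the weaker reshuffling invariance of Lemma~\ref{lem:equal}. That is patchable.) The real problem is that the reduction merely repackages the lemma: after aligning, everything rests on the residual case you flag yourself, namely that a deviation by a later dictator $f_{k}$, $k>i$, which \emph{does} change $f_{k}$'s own bundle cannot change $\pi_{f_{i}}$. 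You leave this as a plan, and the sketched contradiction does not close. At your canonical fixed point $f_{k}$ top-ranks her own bundle, so strategyproofness for $f_{k}$ is satisfied by \emph{any} outcome of her deviation and yields no contradiction, while non-bossiness is silent precisely because her own bundle changes. Nothing in the sketch produces the ``strictly preferred bundle for $f_{k}$'' that the contradiction needs.

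More importantly, the residual claim is not provable at the level of generality you announce (``arbitrary strategyproof and non-bossy $\pi$, no dictatorial structure''), because strategyproofness and non-bossiness alone carry no information about what the labels $f_{1},\dots,f_{n}$ mean. For $i=1$ the lemma's hypothesis is vacuous, so it asserts that $\pi_{f_{1}}$ depends only on $\succ_{f_{1}}$; take the serial dictatorship with true picking order $(2,1)$ but label it $f=(1,2)$ --- this mechanism is strategyproof, non-bossy, even Pareto efficient, yet agent $1$'s allocation moves with agent $2$'s report while the (empty) hypothesis holds. So the lemma is true only for the specific ordering $f$ produced by Lemma~\ref{lem:ordering}, whose construction uses Pareto C-efficiency and under which $\pi$ acts dictatorially along $f$. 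The paper's own proof, though terse, leans on exactly this structure: equality of the first $i-1$ dictators' allocations forces the same remaining set $M' = M \setminus \bigcup_{k<i}\pi_{f_{k}}(\succ)$ in both profiles, and $f_{i}$'s bundle is then determined by $M'$ together with her own unchanged preference. Your deliberate refusal to invoke that structure is what makes your ``main obstacle'' insurmountable rather than merely delicate; the fix is to import the context in which the lemma is actually applied (the ordering and remaining-set property from Lemma~\ref{lem:ordering}), after which your inductive bookkeeping becomes unnecessary.
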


The next Lemma guarantees that the outcome of a strategyproof and non-bossy mechanism only changes when an agent states that some set of objects that are less preferred to $\pi_{i}(\succ)$ under $\succ_{i}$ is now preferred under $\succ'_{i}$. 
Intuitively, any preference ordering $\succ'_{i}$ which reorders only the sets of objects that are preferred to $\pi_{i}(\succ)$ or the sets of objects that are less preferred to the set of objects allocated via $\pi_{i}(\succ)$ keeps the outcome unchanged. 

\begin{lemma}\label{lem:equal}
	Let $\pi$ be a strategyproof and non-bossy mechanism, and let $\succ,\succ' \in \mathcal{P}^{n}$. For all allocations $A\in \mathcal{A}$, if for all $i\in N, \pi_{i}(\succ) \succeq_{i} A_{i}$ and $\pi_{i}(\succ) \succeq'_{i} A_{i}$, then $\pi(\succ) = \pi(\succ')$.
\end{lemma}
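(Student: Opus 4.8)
The plan is to reduce the multi-agent claim to a single-agent preference change and then induct on the number of agents whose report is switched. Throughout I read the hypothesis in the way suggested by the intuition preceding the statement: for each agent $i$ the passage from $\succ_{i}$ to $\succ'_{i}$ only permutes the bundles lying strictly above $\pi_{i}(\succ)$ among themselves and the bundles lying strictly below $\pi_{i}(\succ)$ among themselves, so that the weak lower contour set of the assigned bundle is preserved, i.e.\ $\{X\in\mathbb{M} : \pi_{i}(\succ)\succeq_{i}X\}=\{X\in\mathbb{M} : \pi_{i}(\succ)\succeq'_{i}X\}$ for every $i\in N$.

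First I would establish the single-agent case: if $\succ'_{-i}=\succ_{-i}$ and the weak lower contour set of $\pi_{i}(\succ)$ agrees under $\succ_{i}$ and $\succ'_{i}$, then $\pi(\succ)=\pi(\succ'_{i},\succ_{-i})$. Applying strategyproofness at $\succ$ to the deviation $\succ'_{i}$ gives $\pi_{i}(\succ)\succeq_{i}\pi_{i}(\succ'_{i},\succ_{-i})$, so the bundle $\pi_{i}(\succ'_{i},\succ_{-i})$ lies in the lower contour set of $\pi_{i}(\succ)$ under $\succ_{i}$; by the contour-set hypothesis it therefore also satisfies $\pi_{i}(\succ)\succeq'_{i}\pi_{i}(\succ'_{i},\succ_{-i})$. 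Applying strategyproofness at $(\succ'_{i},\succ_{-i})$ to the deviation $\succ_{i}$ yields the reverse weak inequality $\pi_{i}(\succ'_{i},\succ_{-i})\succeq'_{i}\pi_{i}(\succ)$. Since $\succ'_{i}$ is a strict order, the two weak inequalities force $\pi_{i}(\succ'_{i},\succ_{-i})=\pi_{i}(\succ)$. Agent $i$'s allocation being unchanged, non-bossiness upgrades this to equality of the whole matrix, $\pi(\succ)=\pi(\succ'_{i},\succ_{-i})$.

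Next I would chain these single steps. Define the hybrid profiles $\succ^{(0)}=\succ$ and $\succ^{(k)}=(\succ'_{1},\ldots,\succ'_{k},\succ_{k+1},\ldots,\succ_{n})$, so that $\succ^{(n)}=\succ'$ and consecutive profiles differ only in agent $k$. I claim by induction that $\pi(\succ^{(k)})=\pi(\succ)$ for every $k$. The key point is that the inductive invariant must carry the \emph{entire} allocation, not merely agent $k$'s share: assuming $\pi(\succ^{(k-1)})=\pi(\succ)$ we obtain $\pi_{k}(\succ^{(k-1)})=\pi_{k}(\succ)$, and the lemma's hypothesis, stated in terms of $\pi_{k}(\succ)$, is exactly what certifies that the weak lower contour set of the bundle actually assigned at $\succ^{(k-1)}$ coincides under $\succ_{k}$ and $\succ'_{k}$. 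The single-agent case then applies verbatim to the switch $\succ^{(k-1)}\to\succ^{(k)}$ and gives $\pi(\succ^{(k)})=\pi(\succ^{(k-1)})=\pi(\succ)$; taking $k=n$ yields $\pi(\succ')=\pi(\succ)$.

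I expect the main obstacle to be precisely this bookkeeping in the induction. Because $\succ$ and $\succ'$ may differ in every coordinate, one cannot transfer the contour-set hypothesis to a hybrid profile unless one already knows that the hybrid reproduces the original allocation. Formulating the induction with the strong invariant $\pi(\succ^{(k)})=\pi(\succ)$, so that $\pi_{k}(\succ^{(k-1)})=\pi_{k}(\succ)$ is available at the exact moment agent $k$ is switched, is what keeps the contour-set condition applicable step by step; and non-bossiness is indispensable for promoting the single-agent equality of $\pi_{k}$ to equality of the full allocation.
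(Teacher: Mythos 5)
Your proof is correct and takes essentially the same route as the paper's (which follows Lemma~1 of Svensson~\cite{svensson1999strategy}): the single-agent step via two applications of strategyproofness plus strictness of preferences, upgraded by non-bossiness, followed by chaining through the hybrid profiles $\succ^{(k)}$. Your explicit inductive invariant $\pi(\succ^{(k)})=\pi(\succ)$ in fact makes precise a step the paper's recursion glosses over, namely that the contour-set hypothesis, stated in terms of $\pi_{k}(\succ)$, remains applicable at each hybrid profile because the full allocation is preserved.
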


The next lemma states that when all agents' preferences are identical, any strategyproof, non-bossy, and Pareto C-efficient mechanism simulates the outcome of a serial dictatorship quota mechanism.

\begin{lemma}\label{lem:ordering}
	Let $\pi$ be a strategyproof, non-bossy, and Pareto C-efficient mechanism with quota system $q$, and $\succ$ be a preference profile where all individual preferences coincide, that is $\succ_{i} = \succ_{j}$ for all $i,j \in N$. Then, there exists an ordering of agents, $f$, such that for each $k = 1,\ldots, |q|$, agent $f_{k}$ receives exactly $q_{k}$ items according to quota $q$ induced by a serial dictatorship.
\end{lemma}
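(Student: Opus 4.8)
The plan is to use the common object ranking to peel off dictators one at a time, showing at each stage that the agent who receives the most preferred still-available object takes exactly the next quota-sized block of top objects. Write the shared preference as $a_{1}\succ a_{2}\succ\cdots\succ a_{m}$, so that a serial dictatorship with quota $q$ and ordering $f$ would assign the consecutive blocks $\{a_{1},\ldots,a_{q_{1}}\}$, $\{a_{q_{1}+1},\ldots,a_{q_{1}+q_{2}}\}$, and so on.

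First I would pin down which objects get assigned. Since every agent has the same lexicographic preference, Pareto C-efficiency forces $\pi(\succ)$ to allocate precisely the top $C=\sum_{k}q_{k}$ objects $\{a_{1},\ldots,a_{C}\}$: if some $a_{s}$ with $s\le C$ were unassigned, then by counting some assigned object $a_{t}$ has $t>C\ge s$, i.e. $a_{s}\succ a_{t}$; replacing $a_{t}$ by $a_{s}$ in its holder's bundle makes that agent strictly better off by responsiveness (Lemma~\ref{Lem:lex}), changes no one else, and still assigns $C$ objects, contradicting Pareto C-efficiency.

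The core step is to identify the first dictator $f_{1}$, the agent receiving $a_{1}$, and to show $\pi_{f_{1}}(\succ)=\{a_{1},\ldots,a_{q_{1}}\}$ is an initial segment of exactly quota size. The main obstacle is that Pareto C-efficiency alone is \emph{useless} here: under a common lexicographic preference, every allocation of the top $C$ objects is Pareto C-efficient, since any reallocation strictly hurts whichever agent loses its most preferred contested object (the holder of $a_{1}$ in any dominating candidate would have to keep $a_{1}$, then recurse). Hence the consecutive-block structure must come entirely from strategyproofness and non-bossiness. To extract it I would argue off the coincident profile: build an auxiliary profile $\succ'$ in which the other agents demote the objects that $f_{1}$ should receive, use Lemma~\ref{lem:equal} (reordering sets strictly above or strictly below an agent's allocated bundle leaves the whole outcome unchanged) to transport the value of $\pi$ between $\succ$ and $\succ'$, and use Lemma~\ref{lem:indep} to freeze $f_{1}$'s bundle against changes in the other reports. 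Combined with strategyproofness applied to $f_{1}$, this rules out $f_{1}$ holding some $a_{t}$ with $t>q_{1}$ in place of a more preferred available object, since a suitable report would then let $f_{1}$ strictly improve.

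Finally I would recurse. Once $f_{1}$ and its block $\{a_{1},\ldots,a_{q_{1}}\}$ are fixed, Lemma~\ref{lem:indep} lets me hold that allocation constant and treat the induced instance on agents $N\setminus\{f_{1}\}$, objects $M\setminus\{a_{1},\ldots,a_{q_{1}}\}$, and shifted quota $(q_{2},\ldots,q_{|q|})$, which is again a coincident-preference instance of the same form. Repeating the argument identifies $f_{2}$ as the holder of $a_{q_{1}+1}$ with block $\{a_{q_{1}+1},\ldots,a_{q_{1}+q_{2}}\}$, and so on through $f_{|q|}$, producing an ordering $f$ for which agent $f_{k}$ receives exactly $q_{k}$ objects, i.e. the serial dictatorship outcome. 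I expect the delicate point throughout to be justifying the exact-size/initial-segment claim purely from incentive constraints, so the correct choice of auxiliary profile in the inductive step is the part that needs the most care.
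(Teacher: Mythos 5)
Your setup is sound: deducing from Pareto C-efficiency that exactly the top $C$ objects are assigned is correct, your observation that Pareto C-efficiency is then vacuous at a coincident profile (any partition of the top $C$ objects is Pareto C-efficient, by the recursive ``holder of the best contested object must keep it'' argument) is sharp and correct, and the recursion via Lemma~\ref{lem:indep} to peel off later dictators is unobjectionable once the first block is established. But the decisive step --- that the holder of $a_{1}$ receives exactly the initial segment $\{a_{1},\ldots,a_{q_{1}}\}$ --- is left as a placeholder, and the one concrete construction you offer for it is directionally wrong. Lemma~\ref{lem:equal} freezes the outcome only under transformations that weakly \emph{enlarge} the lower contour set of each agent's own allocated bundle (push non-received sets down, or lift the received bundle up). Your auxiliary profile has ``the other agents demote the objects that $f_{1}$ should receive''; but in the bad pattern the contested object $a_{t}$ with $t \le q_{1}$ that $f_{1}$ fails to get is held by precisely such another agent $j$, so demoting it in $\succ_{j}$ demotes $j$'s \emph{own} allocated object, shrinks $j$'s lower contour set, and voids Lemma~\ref{lem:equal} --- you can no longer claim the outcome is unchanged at the auxiliary profile. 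Moreover, strategyproofness ``applied to $f_{1}$'' cannot by itself rule out the pattern: $f_{1}$ already holds her top object, and to get a contradiction you must exhibit a profile at which some agent's deviation is \emph{demonstrably} profitable, which requires pinning down $\pi$ at the deviation profile --- exactly the content you defer.

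The paper's proof fills this hole differently. It takes the minimal non-consecutive pattern (agent $i$ holds $x$ and $z$, agent $j$ holds the in-between object $y$, with $x \succ y \succ z$ in the common order) and applies Lemma~\ref{lem:equal} in the legitimate direction: it lifts agent $i$'s \emph{own allocated} objects, transforming $x \succ y \succ z$ into $x \succ z \succ y$ and then $z \succ x \succ y$, each step enlarging the lower contour set of $\{x,z\}$ so the outcome stays fixed. At the resulting profile the mechanism's behavior is that of an interleaved picking sequence ($i$ effectively picks before and after $j$), and the strategyproofness violation is then extracted for the agent caught in the middle, in the spirit of the proof of Theorem~\ref{thm:noInterleaving} --- not for $f_{1}$. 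So the manipulator and the direction of the preference surgery are both opposite to what you propose. (You also assert, without argument, that the first block has size $q_{1}$ rather than $q_{k}$ for some other $k$; the paper is admittedly loose on this point as well, treating the size sequence as given by the quota and proving only consecutiveness.) To repair your plan, replace the ``others demote'' construction with monotonic transformations of the pattern-holder's own preference as above, and aim the final incentive argument at the agent holding the skipped object.
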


\begin{theorem}\label{thm:serialDictatorships}
	If preferences are lexicographic, an allocation mechanism is strategyproof, non-bossy, neutral, and Pareto C-efficient if and only if it is a serial dictatorship quota mechanism.
\end{theorem}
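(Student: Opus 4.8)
The plan is to prove the two directions of the equivalence separately. The easy direction ($\Leftarrow$) is to verify that any serial dictatorship quota mechanism satisfies all four properties. Pareto C-efficiency is already established in Proposition~\ref{prop:ParetoEfficient}. Strategyproofness follows because each agent, when her turn comes, picks her most-preferred feasible set of size $q_k$ from the remaining objects; since the remaining pool depends only on the choices of predecessors (whose allocations are fixed regardless of agent $k$'s report), no misreport can enlarge or improve her attainable set. Non-bossiness holds because an agent's report only influences the objects she removes from the pool, so if her own allocation is unchanged the residual pool for every successor is identical. Neutrality is immediate from the fact that the ordering $f$ is fixed a priori and independent of object names, so relabeling objects via $\phi$ simply relabels each dictator's lexicographically-top set.

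The hard direction ($\Rightarrow$) is to show that any strategyproof, non-bossy, neutral, and Pareto C-efficient mechanism $\pi$ must coincide with a serial dictatorship quota mechanism for \emph{every} profile. First I would fix the ordering $f$ and the quota $q$ by invoking Lemma~\ref{lem:ordering}: at a profile $\succ^{*}$ where all preferences coincide, $\pi$ reproduces a serial dictatorship with some ordering $f$ and quota sizes $q_k$. The goal is then to argue that this same $(f,q)$ governs $\pi$ on arbitrary profiles. The key tool is Lemma~\ref{lem:equal}, which lets me swap any agent's preference for another that agrees with the original on the ``upper'' and ``lower'' contours around $\pi_i(\succ)$ without changing the outcome, together with Lemma~\ref{lem:indep}, which isolates each dictator's allocation as depending only on predecessors.

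The central step is an inductive argument along the dictator ordering. I would show that the first dictator $f_1$ receives her lexicographically top $q_1$ objects at every profile. Neutrality is the crucial ingredient here: by Lemma~\ref{lem:equal} I can assume $f_1$'s reported preference is lexicographic with any prescribed top objects, and neutrality forces that $\pi_{f_1}$ depends only on the \emph{position} of objects in her ranking, not their identities, which combined with Pareto C-efficiency (so she cannot receive a dominated set) pins her allocation down to exactly the top $q_1$ objects. Having fixed $\pi_{f_1}$, I would condition on the residual problem: the remaining $m - q_1$ objects and the $n-1$ remaining agents form a strictly smaller instance, and by Lemma~\ref{lem:indep} the subsequent allocations are independent of $f_1$'s choice beyond the removed objects, so induction applies to show $f_2$ takes her top $q_2$ of the remainder, and so on.

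The main obstacle I anticipate is handling the interaction between neutrality and the quota sizes: I must rule out the possibility that $\pi$ behaves like a serial dictatorship with different quotas on different profiles, or that the dictator ordering $f$ itself varies with $\succ$ (which would make $\pi$ merely \emph{sequential} rather than \emph{serial}). Neutrality is what should preclude the latter---since an object-permutation cannot change the identity of the next dictator in a serial dictatorship but could in a sequential one, any profile-dependence of the ordering would violate the commutation $\phi(\pi(\succ)) = \pi(\phi(\succ))$. I would make this precise by showing that if two profiles induced different orderings, a suitable permutation $\phi$ relating them would yield contradictory allocations. Tying together the fixed ordering from Lemma~\ref{lem:ordering}, the contour-invariance from Lemma~\ref{lem:equal}, and this neutrality argument is the delicate part of the proof.
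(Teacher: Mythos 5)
Your decomposition, your reliance on Lemmas \ref{lem:ordering}, \ref{lem:equal} and \ref{lem:indep}, and your observation that neutrality is what pins the ordering $f$ (and quota $q$) down across identical profiles all match the paper---indeed your neutrality-fixes-$f$ argument is more explicit than the paper's own terse remark that ``the ordering is fixed a priori.'' But the central step of your converse direction has a genuine gap. You claim that, at an arbitrary profile, ``neutrality forces that $\pi_{f_1}$ depends only on the position of objects in her ranking, not their identities.'' Neutrality only gives $\phi(\pi(\succ)) = \pi(\phi(\succ))$ for a \emph{simultaneous} relabeling of every agent's preference; with $\succ_{-f_1}$ held fixed, the only permutation $\phi$ satisfying $\phi(\succ_j) = \succ_j$ for some other agent $j$ with a strict ranking is the identity, so no positional invariance of $\pi_{f_1}$ in her own ranking follows. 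Nor does Pareto C-efficiency rescue the step: an allocation giving $f_1$ a set other than her lexicographically top $q_1$ objects can perfectly well be Pareto C-efficient overall. Relatedly, your residual-market induction (``the remaining $m - q_1$ objects and the $n-1$ remaining agents form a strictly smaller instance'') is circular as stated: $\pi$ is defined on full $n$-agent profiles, and nothing proved at that point licenses treating its continuation as a smaller instance of the same mechanism---Lemma \ref{lem:indep} only compares two full profiles whose predecessor dictators receive identical allocations.

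The paper closes exactly this hole with a construction your proposal is missing: given an arbitrary $\succ$, it builds (Algorithm \ref{alg:identical}) a single \emph{identical} profile $\succ'$ whose common ranking interleaves the dictators' successive choices---first $f_1$'s top $q_1$ objects under $\succ_{f_1}$, then $f_2$'s best $q_2$ objects among the remainder under $\succ_{f_2}$, and so on. Because each agent $f_i$'s assigned set then sits in the right contour position under both $\succ_{f_i}$ and $\succ'_{f_i}$, chaining Lemma \ref{lem:equal} (each single-agent swap leaves that agent's allocation, hence by non-bossiness the entire allocation, unchanged) with Lemma \ref{lem:indep} yields $\pi(\succ') = \pi(\succ)$; applying Lemma \ref{lem:ordering} at the identical profile $\succ'$ then exhibits $\pi(\succ)$ as the serial dictatorship outcome with ordering $f$ and quota $q$. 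Your forward direction ($\Leftarrow$) is fine and matches the paper. To repair your plan, replace the per-agent neutrality argument with this tailored-identical-profile transfer, and reserve neutrality for the one place you already use it correctly: ruling out profile-dependence of the dictator ordering across identical profiles.
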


\begin{proof}
	It is clear that in the multiple-assignment problem any serial dictatorship mechanism is strategyproof, neutral, and non-bossy~\cite{papai2001strategyproof}. For Pareto efficiency, in Proposition~\ref{prop:ParetoEfficient}, we showed that the serial dictatorship mechanism is Pareto C-efficient for any quota, and in fact it becomes Pareto efficient in a stronger sense when all objects are allocated $C = m$.
	
	Now, we must show that any strategyproof, Pareto C-efficient, neutral, and non-bossy mechanism, $\pi$, can be simulated via a serial dictatorship quota mechanism. Let $\pi$ be a strategyproof, Pareto C-efficient, neutral, and non-bossy mechanism. Consider $\succ\in\mathcal{P}^{n}$ to be an arbitrary lexicographic preference profile. Given $q$, we want to show that $\pi$ is a serial dictatorship mechanism. Thus, we need to find an ordering $f$ that induces the same outcome as $\pi$ when allocating objects serially according to quota $q$.

	Take an identical preference profile and apply the mechanism $\pi$ with a quota $q$.
	By Lemma~\ref{lem:ordering}, there exists a serial dictatorial allocation with an ordering $f$ where agent $f_1$ receives $q_{1}$ of her favorite objects from $M$, agent $f_2$ receives $q_{2}$ of her best objects from $M \setminus \pi_{f_{1}}$, and so on. Therefore, given a strategyproof, non-bossy, neutral, and Pareto C-efficient mechanism with quota $q$, we can identify an ordering of agents $f= (f_1, \ldots, f_{n})$ that receive objects according to $q = (q_1, \ldots, q_{n})$. Note that since the ordering is fixed a priori, the same $f$ applies to any non-identical  preference profile.
	
	From any arbitrary preference profile $\succ$, we construct an equivalent profile as follows: 
	Given the ordering $f$, the first best $q_{1}$ objects (the set of size $q_{1}$) according to $\succ_{f_{1}}$ are denoted by $A_{f_{1}}$ and are listed as the first objects (or set of objects of size $q_{1}$ since preferences are lexicographic) in $\succ'_{i}$. The next $q_{2}$ objects in $\succ'_{i}$ are the first best $q_{2}$ objects according to $\succ_{f_{2}}$ from $M \setminus A_{f_{1}}$, and so on. In general, for each $i= 2,\ldots, |q|$, the next best $q_{i}$ objects are the best $q_{i}$ objects according to $\succ_{f_{i}}$ from $M \setminus \bigcup_{j=1}^{j=i-1} A_{j}$. 
	Algorithm~\ref{alg:identical}, which can be found in the Appendix, illustrates these steps.
	
	Now we need to show that applying $\pi$ to the constructed identical preference profile ($\succ'$) induces the same outcome as applying it to $\succ$. 
	By Lemma~\ref{lem:indep} for each agent $f_{i}$, $\pi_{f_{i}}(\succ) = \pi_{f_{i}}(\succ')$ if for all $j < i$ we have $\pi_{f_{j}}(\succ) = \pi_{f_{j}}(\succ')$. That is, the allocation of an agent remains the same if the allocations of all previous agents remain unchanged.
	Now by Lemma~\ref{lem:equal}, for any allocation $A \in \mathcal{A}$, if for each agent $i\in N$, $\pi_{i}(\succ') \succeq'_{i} A_{i}$ then we also have $\pi_{i}(\succ') \succeq_{i} A_{i}$.
	For each $f_{i}$ where $i = 1,\ldots,|q|$, by Lemma~\ref{lem:equal} since $\pi$ is strategyproof and non-bossy, for any allocation $A_{f_{i}}$ given the quota $q$ we have $\pi_{f_{i}} \succeq'_{f_{i}} A_{f_{i}}$ and $\pi_{f_{i}} \succeq_{f_{i}} A_{f_{i}}$, which implies that $\pi_{f_{i}}(\succ') = \pi_{f_{i}}(\succ)$. Therefore, we have $\pi(\succ') = \pi(\succ)$.
	Since $\succ'$ is an identical profile, $\pi(\succ')= \pi(\succ)$ assigns $q_{i}$ objects to each agent according to the serial ordering $f$. 
	Thus, $\pi$ is a serial dictatorship quota mechanism. \qed
\end{proof}

The following example illustrates how an equivalent preference profile with identical outcome is constructed given any arbitrary preference profile, ordering, and quota system.

\begin{example}
	Consider allocating 4 objects to 3 agents with preferences illustrated in Table~\ref{Tab:identicalPref} (left), based on the following quota $q=(1,2,1)$. 
	Assume the following ordering of agents $f = (1,2,3)$. To construct a profile with identical orderings, agent 1's first best object according to $\succ_1$, $a$, is considered the highest ranking object in $\succ'_i$. Agent 2's best two objects ($q_{2} = 2$) among the remaining objects $c$ and $b$ are ranked next, and finally agent 3's remaining object $d$ is ranked last. Given $f$ and $q$, the two preference profiles depicted in Table~\ref{Tab:identicalPref} have exactly similar outcome (shown with squares).
	
	\begin{minipage}{\textwidth} 
		\centering
		\begin{minipage}{0.45\linewidth}
			\centering
			\begin{tabular}{|c c|}
				\hline 
				$\succ_{1}: $ & $ \boxed{a} \succ b \succ c \succ d$ \\ 
				
				$\succ_{2}: $ & $\boxed{c} \succ a \succ \boxed{b} \succ d$ \\ 
				
				$\succ_{3}: $ & $ a \succ c \succ \boxed{d} \succ b$ \\ 
				\hline 
			\end{tabular}
		\end{minipage}
		\begin{minipage}{0.45\linewidth}
			\centering
			\begin{tabular}{|c c|}
				\hline 
				$\succ'_{1}: $ & $ \boxed{a} \succ c \succ b \succ d$ \\ 
				
				$\succ'_{2}:  $ & $ a \succ \boxed{c} \succ \boxed{b} \succ d$ \\ 
				
				$\succ'_{3}:  $ & $ a \succ c \succ b \succ \boxed{d}$ \\ 
				\hline 
			\end{tabular}
		\end{minipage}
		\captionof{table}{Converting a preference profile to identical orderings, with exact same outcome.}
		\label{Tab:identicalPref}
	\end{minipage}
\end{example}

Finally, we show that strategyproofness and non-bossiness are necessary and sufficient conditions for group-strategyproofness.

\begin{proposition} \label{groupSP}
	A mechanism is group-strategyproof if and only if it is strategyproof and non-bossy.
\end{proposition}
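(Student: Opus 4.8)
The plan is to prove the two implications separately. For the easy direction, suppose $\pi$ is group-strategyproof. Strategyproofness is immediate, since it is the special case of a singleton coalition $S=\{i\}$. For non-bossiness I would argue by contraposition. If $\pi$ is bossy there exist $\succ$, an agent $i$, and a report $\succ'_i$ with $\pi_i(\succ'_i,\succ_{-i})=\pi_i(\succ)$ yet $\pi_j(\succ'_i,\succ_{-i})\neq\pi_j(\succ)$ for some $j\neq i$. Because preferences are strict, $j$'s bundle either strictly improves or strictly worsens between the two profiles. In the improving case the coalition $S=\{i,j\}$ --- with $i$ reporting $\succ'_i$ and $j$ reporting truthfully --- leaves $i$ indifferent and makes $j$ strictly better off relative to $\pi(\succ)$, violating group-strategyproofness; in the worsening case the same two agents form a profitable coalition when $(\succ'_i,\succ_{-i})$ is taken as the true profile. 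Either way group-strategyproofness fails, which gives the contrapositive.

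For the converse I would first record a single-agent lemma, essentially the invariance underlying Lemma~\ref{lem:equal}: if $\pi$ is strategyproof and non-bossy and agent $i$ changes her report unilaterally, then either the \emph{entire} allocation is unchanged, or $i$'s own bundle strictly worsens under $\succ_i$. This is immediate --- strategyproofness forces the bundle to weakly worsen, non-bossiness says an unchanged bundle leaves the whole outcome fixed, and strictness turns any genuine change into a strict loss. Assuming, for contradiction, a profitable coalition $S$ with $A=\pi(\succ)$, $B=\pi(\succ'_S,\succ_{-S})$, $B_i\succeq_i A_i$ for all $i\in S$ and $B_j\succ_j A_j$ for some $j$, I would argue by induction on $|S|$ that in fact $B=A$, contradicting the strict gain of $j$. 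A helpful normalization is that each deviator may be taken to report a preference whose top bundle is exactly $B_i$: switching to such a report keeps $i$'s bundle equal to $B_i$ by strategyproofness, hence, by non-bossiness, keeps the whole outcome at $B$.

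The inductive step reverts one member $\ell$ to her truthful report, producing the profile in which the smaller coalition $S\setminus\{\ell\}$ deviates from $\succ$. The single-agent lemma applied to $\ell$'s report change yields a dichotomy: either the outcome is still $B$, in which case all members of $S\setminus\{\ell\}$ remain weakly better off and the induction hypothesis forces $B=A$; or reverting strictly improves $\ell$'s own bundle. The latter \emph{sacrificing-member} case is the main obstacle, since there $\ell$'s truthful report benefits some other coalition member, so the reduction to a smaller coalition is not immediate. I expect to handle it by taking $S$ to be a \emph{minimal} profitable coalition, which forces every member into this case, and then deriving a contradiction by combining the induction hypothesis --- which locates, for each reverted member, a third member who is strictly worse off --- with repeated use of the single-agent lemma and non-bossiness to track how one agent's single report change can simultaneously help and hurt coalition members. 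Making this final contradiction precise is the step I anticipate will require the most care.
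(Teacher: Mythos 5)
Your easy direction is sound, and so is your normalization step: moving $B_i=\pi_i(\succ'_S,\succ_{-S})$ to the top of deviator $i$'s report keeps $i$'s bundle at $B_i$ (strategyproofness applied with the new report treated as the truth, since $B_i$ is then top-ranked) and hence, by non-bossiness, keeps the whole outcome at $B$; this is exactly the $\hat{\succ}$ construction the paper uses together with Lemma~\ref{lem:equal}. The genuine gap is in the converse's inductive step, and you flag it yourself: the sacrificing-member case is never resolved, and minimality does not close it. First, minimality does not in fact force every member into that case: if the reverted member $\ell$ happens to be the unique strict gainer, then in your case (a) the residual coalition $S\setminus\{\ell\}$ has all members exactly indifferent ($B_i=A_i$ for $i\in S\setminus\{\ell\}$), so it is not a profitable coalition and minimality yields no contradiction. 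Second, even when the induction hypothesis does apply to $S\setminus\{\ell\}$ at the reverted profile, negating profitability gives only ``some member strictly worse off \emph{or} every member receives exactly $A_i$,'' and neither branch is converted into a contradiction --- your sketch ends with the admission that this step ``will require the most care.'' As written, the proof does not go through. A smaller point: your single-agent lemma is stated for deviations \emph{from} the truth (weak worsening), but you apply it to a move \emph{back to} the truth, where strategyproofness gives weak improvement; you use the correct direction in the text, but the lemma should be stated that way.

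The repair is to run the replacement in the opposite direction, which is the paper's route: start from the \emph{true} profile $\succ$ and install the normalized reports $\hat{\succ}_i$ one member at a time. At each step, applying strategyproofness twice --- once with $\succ_i$ as the truth and once with $\hat{\succ}_i$ as the truth --- shows that agent $i$'s new bundle $X$ satisfies $A_i\succeq_i X$, while $X$ is weakly preferred to $A_i$ under $\hat{\succ}_i$. Since $\hat{\succ}_i$ differs from $\succ_i$ only by promoting $B_i$ to the top, either $X=A_i$ (and non-bossiness freezes the entire outcome at $A$) or $X=B_i$ with $A_i\succ_i B_i$, which contradicts the coalition premise $B_i\succeq_i A_i$. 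So the outcome stays at $A$ through all $|S|$ replacements, while your own normalization shows the fully hatted profile yields $B$; hence $A=B$, contradicting the strict gain of $j$. The moral is that walking forward from $\succ$ makes the coalition premise available to kill the bad branch at every step --- precisely the leverage your backward reversion lacks, which is why no minimality device or case bookkeeping is needed.
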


It is critical to note that a group-strategyproof mechanism does not rule out the possibility of manipulation by a subset of agents that misreport their preferences and then exchange their allocations ex post. 
Consequently, it is easy to see that serial dictatorship quota mechanisms are guaranteed against group manipulation but do not prevent coalitional manipulation through reallocation \cite{papai2000hierarchical}.
We rewrite Theorem~\ref{thm:serialDictatorships} as the following:

\begin{theorem}\label{thm:groupSD}
	Serial dictatorship quota mechanisms are the only neutral, Pareto C-efficient, and group-strategyproof mechanisms.
\end{theorem}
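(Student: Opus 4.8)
The plan is to combine Theorem~\ref{thm:serialDictatorships} with Proposition~\ref{groupSP} to obtain the characterization essentially for free. Theorem~\ref{thm:serialDictatorships} already establishes that, under lexicographic preferences, a mechanism is strategyproof, non-bossy, neutral, and Pareto C-efficient if and only if it is a serial dictatorship quota mechanism. Proposition~\ref{groupSP} states that group-strategyproofness is equivalent to the conjunction of strategyproofness and non-bossiness. So the natural approach is simply to substitute the latter equivalence into the former: the compound condition \emph{strategyproof and non-bossy} appearing in Theorem~\ref{thm:serialDictatorships} can be replaced verbatim by the single condition \emph{group-strategyproof}, leaving the remaining axioms (neutrality and Pareto C-efficiency) untouched.

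Concretely, I would argue both directions. For the forward direction, suppose $\pi$ is neutral, Pareto C-efficient, and group-strategyproof. By the ``only if'' direction of Proposition~\ref{groupSP}, group-strategyproofness implies $\pi$ is both strategyproof and non-bossy. Hence $\pi$ satisfies all four hypotheses of Theorem~\ref{thm:serialDictatorships}, so $\pi$ is a serial dictatorship quota mechanism. For the reverse direction, suppose $\pi$ is a serial dictatorship quota mechanism. By Theorem~\ref{thm:serialDictatorships} it is strategyproof, non-bossy, neutral, and Pareto C-efficient; in particular it is strategyproof and non-bossy, so by the ``if'' direction of Proposition~\ref{groupSP} it is group-strategyproof, and it remains neutral and Pareto C-efficient. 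This closes the equivalence.

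The main thing to be careful about is the logical bookkeeping rather than any genuine mathematical obstacle: I must confirm that Proposition~\ref{groupSP} is stated as a clean biconditional (it is: ``group-strategyproof if and only if strategyproof and non-bossy''), so that the substitution is valid in both directions and no side conditions leak in. There is no restriction on lexicographic preferences needed in Proposition~\ref{groupSP} itself, but since Theorem~\ref{thm:serialDictatorships} is stated in the lexicographic domain, the characterization inherits that standing assumption; I would make sure the implicit domain assumption is consistent across the two invoked results. Beyond that, the proof is a one-line logical rewrite, so the expected difficulty is negligible—the real content was already discharged in Theorem~\ref{thm:serialDictatorships} and Proposition~\ref{groupSP}.

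Thus the entire proof reduces to: ``By Proposition~\ref{groupSP}, group-strategyproofness is equivalent to strategyproofness together with non-bossiness. Substituting this equivalence into Theorem~\ref{thm:serialDictatorships} yields that a mechanism is neutral, Pareto C-efficient, and group-strategyproof if and only if it is a serial dictatorship quota mechanism.'' I would present it exactly in this compressed form, citing both prior results explicitly.
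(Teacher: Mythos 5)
Your proposal is correct and matches the paper's own treatment exactly: the paper introduces Theorem~\ref{thm:groupSD} with the words ``We rewrite Theorem~\ref{thm:serialDictatorships} as the following,'' i.e., it too obtains the result by substituting the biconditional of Proposition~\ref{groupSP} (group-strategyproofness $\Leftrightarrow$ strategyproofness $+$ non-bossiness) into the characterization of Theorem~\ref{thm:serialDictatorships}. Your added care about the standing lexicographic-domain assumption being inherited from Theorem~\ref{thm:serialDictatorships} is sound and, if anything, slightly more explicit than the paper itself.
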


\section{Sequential Dictatorship Quota Mechanisms}
In this section, we study a broader class of quota mechanisms by relaxing the neutrality requirement and allowing for the dictators to be identified in each sequence, as opposed to fixing the dictatorship orderings apriori.

\begin{proposition}\label{prop:seqPareto}
	A sequential dictatorship quota mechanism is Pareto C-efficient under lexicographic preferences.
\end{proposition}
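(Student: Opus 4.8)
The plan is to reduce this to the already-established serial-dictatorship case in Proposition~\ref{prop:ParetoEfficient}, exploiting the fact that the dictator-selection function $f$ depends only on the \emph{allocations} of earlier dictators and not on their reported preferences. First I would fix an arbitrary lexicographic profile $\succ\in\mathcal{P}^{n}$ and let $A=\pi(\succ)$ be the outcome of the sequential dictatorship with quota $q$. Running the mechanism on $\succ$ produces a concrete sequence of dictators $\sigma=(f_{1},f_{2},\ldots,f_{|q|})$, namely $f_{1}$, then $f_{2}(\pi_{f_{1}}(\succ))$, and so on. The central observation is that at this particular profile the allocation $A$ coincides with the outcome of the \emph{serial} dictatorship quota mechanism $\pi^{\sigma}$ whose ordering is fixed a priori to be exactly $\sigma$: in both mechanisms the first dictator takes the best $q_{1}$ objects from $M$, the second the best $q_{2}$ from $M\setminus\pi_{f_{1}}(\succ)$, and so forth, so the two greedy selections agree object-by-object.

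Second, I would invoke Proposition~\ref{prop:ParetoEfficient}. Since $\pi^{\sigma}$ is a genuine serial dictatorship quota mechanism (its ordering is independent of the profile), that proposition guarantees it is Pareto C-efficient; in particular $\pi^{\sigma}(\succ)$ is not Pareto dominated by any feasible allocation assigning $C=\sum_{i}q_{i}$ objects. Because $A=\pi(\succ)=\pi^{\sigma}(\succ)$, the same conclusion transfers to $A$, and as $\succ$ was arbitrary this establishes that $\pi$ is Pareto C-efficient.

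The only point requiring care is the claim that $\pi(\succ)=\pi^{\sigma}(\succ)$, and here the restriction built into the definition of sequential dictatorship --- that the next dictator is chosen as a function of the earlier dictators' allocations rather than their preferences --- is exactly what is needed: it ensures the induced ordering $\sigma$ is well defined from the partial allocations alone and that freezing it into an a-priori ordering reproduces the same greedy picks at $\succ$. I therefore expect the main (though modest) obstacle to be phrasing this equivalence precisely rather than any new efficiency argument, since all of the underlying lexicographic reasoning (Lemma~\ref{Lem:lex} and the two-case analysis) has already been carried out in the proof of Proposition~\ref{prop:ParetoEfficient} and need not be repeated.
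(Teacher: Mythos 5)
Your proof is correct, but it takes a genuinely different route from the paper's. The paper gives no standalone argument here: it simply asserts that the proof ``exactly follows'' that of Proposition~\ref{prop:ParetoEfficient}, i.e., it re-runs the two-case contradiction argument (via Lemma~\ref{Lem:lex}, Cases I and II) with the sequential ordering in place of a fixed one. You instead invoke Proposition~\ref{prop:ParetoEfficient} as a black box and reduce pointwise: at a fixed profile $\succ$, the realized dictator sequence $\sigma$ is well defined precisely because $f$ depends only on earlier dictators' \emph{allocations}, the sequential outcome coincides with $\pi^{\sigma}(\succ)$ for the a-priori serial mechanism $\pi^{\sigma}$ with ordering $\sigma$ and the same position-indexed quotas, and since Pareto C-efficiency is a per-profile condition on the output allocation alone, undominatedness of $\pi^{\sigma}(\succ)$ transfers directly to $\pi(\succ)$. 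Your approach buys something concrete: re-running the paper's Case II verbatim in the sequential setting is actually delicate, because that case evaluates the mechanism at a \emph{constructed} profile $\succ'$, where the induced dictator ordering could in principle differ from $\sigma$; your reduction never evaluates the sequential mechanism at any profile other than $\succ$, so it sidesteps this wrinkle entirely, whereas the paper's route keeps the argument self-contained without introducing an auxiliary mechanism. The only step you should spell out when writing this up is the short induction behind $\pi(\succ)=\pi^{\sigma}(\succ)$ (same dictator, same remaining pool, hence the same most-preferred $q_{i}$-set at each stage), but as you note this is routine.
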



The proof exactly follows as of the proof of Proposition~\ref{prop:ParetoEfficient}. 
%
%
Characterizing the set of strategyproof, non-bossy, and Pareto C-efficient quota mechanisms is similar to our characterization for serial dictatorship mechanisms, but requires a subtle change in Lemma~\ref{lem:ordering}.

\begin{lemma}\label{lem:SeqOrdering}
	Let $\pi$ be a strategyproof, non-bossy, and Pareto C-efficient mechanism with quota $q$, and $\succ$ be a preference profile where all individual preferences coincide, that is $\succ_{i} = \succ_{j}$ for all $i,j \in N$. 
	Then, there exists an ordering $f_{1}, f_{2}(\pi_{f_{1}}(\succ)), \ldots, f_{k}(\pi_{f_{1}}(\succ), \ldots, \pi_{f_{k-1}}(\succ))$ such that for each $i\in N$ agent $i$ receives exactly $q_{i}$ items according to quota $q$.
\end{lemma}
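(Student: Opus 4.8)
The plan is to reuse the argument behind Lemma~\ref{lem:ordering} almost verbatim, changing only the way the dictator ordering is produced so that it conforms to the sequential (rather than serial) structure. Fix the common ranking $a_{1} \succ a_{2} \succ \cdots \succ a_{m}$ shared by all agents under the identical profile $\succ$ and set $C = \sum_{i} q_{i}$. The first step is to show that $\pi(\succ)$ allocates exactly the top $C$ objects $\{a_{1},\ldots,a_{C}\}$: were some $a_{j}$ with $j \le C$ left unassigned while a lower-ranked $a_{k}$ (with $k > C$) were handed to some agent, then replacing $a_{k}$ by $a_{j}$ in that agent's bundle yields a strictly preferred bundle by the second clause of Lemma~\ref{Lem:lex} (since $a_{j} \succ a_{k}$ and the remainder of the bundle is held fixed), contradicting Pareto C-efficiency. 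This is the same reduction used in Proposition~\ref{prop:ParetoEfficient} and invoked for Proposition~\ref{prop:seqPareto}.

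Next I would identify the dictators recursively. Let $f_{1}$ be the agent who receives $a_{1}$. Exactly as in Lemma~\ref{lem:ordering}, I would invoke strategyproofness and non-bossiness, through Lemmas~\ref{lem:indep} and~\ref{lem:equal}, to force $f_{1}$'s bundle to be the consecutive prefix $\{a_{1},\ldots,a_{q_{1}}\}$, which is precisely what the first dictator of a dictatorship with quota $q$ obtains. After deleting $f_{1}$ together with the objects in $\pi_{f_{1}}(\succ)$, the residual instance on $N \setminus \{f_{1}\}$ and $M \setminus \pi_{f_{1}}(\succ)$ is again an identical lexicographic profile, so repeating the argument yields $f_{2}$ receiving the next $q_{2}$ objects, and inductively $f_{k}$ receiving $q_{k}$ objects for $k = 1,\ldots,|q|$.

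The single, and titular, modification relative to Lemma~\ref{lem:ordering} is that, neutrality having been dropped, I do not claim the resulting ordering is a fixed permutation of the agents. Instead each dictator after the first is declared to be a function of the bundles already distributed, namely $f_{i} = f_{i}(\pi_{f_{1}}(\succ),\ldots,\pi_{f_{i-1}}(\succ))$, matching the definition of a sequential dictatorship. On the identical profile every $\pi_{f_{j}}(\succ)$ is pinned down by the recursion above, so this allocation-dependent selection is well defined and reproduces exactly the sequential dictatorship allocation with quota $q$. I would also check, as required by the definition and illustrated by the earlier bossiness example, that the next dictator is determined only by the previous dictators' allocation sets and never by their (here common) preferences; on an identical profile this holds vacuously, but recording it is what lets the lemma feed correctly into the characterization theorem for the non-neutral setting. (Indeed, since a fixed ordering is a degenerate allocation-dependent one, the statement even follows directly from Lemma~\ref{lem:ordering}; the point of re-deriving it is to exhibit the ordering in the recursive form the sequential theorem requires.)

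I expect the crux, and the only genuinely nontrivial step, to be the prefix claim in the second paragraph: ruling out Pareto C-efficient but \emph{interleaved} allocations. For instance, under identical lexicographic preferences, giving the top agent $\{a_{1},a_{3}\}$ and another agent $\{a_{2}\}$ is Pareto C-efficient yet is not a dictatorship outcome, so efficiency alone cannot pin down the structure. Only strategyproofness and non-bossiness (via Lemmas~\ref{lem:indep} and~\ref{lem:equal}) exclude such interleaving and tie each dictator's bundle to a consecutive block of the correct quota size. Once that structural fact is borrowed from the proof of Lemma~\ref{lem:ordering}, the passage to an allocation-dependent ordering is immediate and the lemma follows.
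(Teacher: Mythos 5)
Your proposal is correct and takes essentially the same route as the paper: the paper's entire proof is precisely your closing parenthetical, namely invoking Lemma~\ref{lem:ordering} to obtain a fixed ordering $f'=(f'_{1},\ldots,f'_{k})$ on the identical profile and relabeling it as the degenerate allocation-dependent ordering $f_{1}=f'_{1}$, $f_{i}(\pi_{f_{1}}(\succ),\ldots,\pi_{f_{i-1}}(\succ))=f'_{i}$. Your longer re-derivation of the prefix structure (the top-$C$ claim and the exclusion of interleaved Pareto C-efficient allocations via Lemmas~\ref{lem:indep} and~\ref{lem:equal}) is sound but redundant, since Lemma~\ref{lem:ordering} already supplies exactly that structure.
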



\begin{theorem}\label{thm:sequentialQD}
	An allocation mechanism is strategyproof, non-bossy, and Pareto C-efficient if and only if it is a sequential dictatorship quota mechanism.
\end{theorem}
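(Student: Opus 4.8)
The plan is to mirror the proof of Theorem~\ref{thm:serialDictatorships} almost verbatim, substituting the sequential version of the ordering lemma (Lemma~\ref{lem:SeqOrdering}) for the serial one (Lemma~\ref{lem:ordering}) and dropping everything that invoked neutrality. First I would establish the easy direction: any sequential dictatorship quota mechanism is strategyproof, non-bossy, and Pareto C-efficient. Strategyproofness and non-bossiness follow as in \cite{papai2001strategyproof} --- an agent called as the $i$th dictator receives her most preferred admissible set of size $q_i$ from the remaining objects, so she cannot gain by misreporting, and because the identity of each subsequent dictator depends only on \emph{allocations} (not preferences) of previous dictators, a misreport that leaves her own allocation fixed leaves the whole dictator sequence and hence the entire outcome fixed. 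Pareto C-efficiency is exactly Proposition~\ref{prop:seqPareto}. Note that, unlike Theorem~\ref{thm:serialDictatorships}, I make no neutrality claim, which is precisely what lets the class expand.

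For the converse, let $\pi$ be strategyproof, non-bossy, and Pareto C-efficient with quota $q$, and fix an arbitrary lexicographic profile $\succ$. By Lemma~\ref{lem:SeqOrdering}, applied to the identical profile, there is a dictator ordering $f_1, f_2(\pi_{f_1}), \ldots, f_k(\pi_{f_1},\ldots,\pi_{f_{k-1}})$ under which each agent $i$ receives exactly $q_i$ objects in serial-dictatorship fashion. I would then build the equivalent identical profile $\succ'$ exactly as in the proof of Theorem~\ref{thm:serialDictatorships}: list as the top $q_1$ objects of every agent the best $q_1$ objects of $\succ_{f_1}$, then the best $q_2$ objects of $\succ_{f_2}$ from what remains, and so on, recursively peeling off $q_i$ objects for the $i$th dictator. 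The crucial subtlety here, and the one place the proof genuinely differs, is that the dictator sequence is now \emph{endogenous}: the set of remaining objects from which $f_i$ chooses is determined by the allocations of $f_1,\ldots,f_{i-1}$, so the construction of $\succ'$ must be carried out in lockstep with the unfolding dictator sequence rather than against a fixed permutation. This is handled cleanly by the recursive form of $f$ built into Lemma~\ref{lem:SeqOrdering}.

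To show $\pi(\succ)=\pi(\succ')$, I would argue by induction on the dictator order exactly as before. Using Lemma~\ref{lem:indep}, the allocation $\pi_{f_i}$ is unchanged between $\succ$ and $\succ'$ provided $\pi_{f_j}(\succ)=\pi_{f_j}(\succ')$ for all $j<i$; and since the next dictator's identity depends only on these (equal) prior allocations, the sequence $f$ itself is identical under both profiles. Lemma~\ref{lem:equal} then upgrades agreement-on-preferences to agreement-on-allocation: for each dictator $f_i$, because $\pi$ is strategyproof and non-bossy and because $\succ_{f_i}$ and $\succ'_{f_i}$ rank the relevant allocated set above all competing admissible sets of the correct size, we get $\pi_{f_i}(\succ)=\pi_{f_i}(\succ')$. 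Closing the induction gives $\pi(\succ)=\pi(\succ')$, and since $\succ'$ is identical, $\pi(\succ')$ assigns $q_i$ objects to each agent through the sequential ordering $f$. Hence $\pi$ coincides with a sequential dictatorship quota mechanism on every profile.

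The main obstacle --- and the reason this is not a literal copy of the earlier proof --- is verifying that the inductive step remains valid when the dictator order is chosen adaptively: I must confirm that the next-dictator function evaluates identically on $\succ$ and $\succ'$, which hinges on the fact (emphasized in the model's definition of sequential dictatorship) that $f_i$ depends only on the previous dictators' \emph{allocations} and not on their reported preferences. Once the prior allocations are shown equal by the induction hypothesis, the next dictator is forced to be the same agent under both profiles, and the construction of $\succ'$ restricted to that agent's remaining object pool is exactly the one that makes Lemma~\ref{lem:equal} applicable. Establishing this consistency of the endogenous ordering is the only genuinely new ingredient beyond the serial case.
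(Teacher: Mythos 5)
Your proposal is correct and follows essentially the same route as the paper's own proof: both directions rest on Proposition~\ref{prop:seqPareto} for efficiency, Lemma~\ref{lem:SeqOrdering} applied to an identical profile to extract the recursive dictator ordering, the lockstep construction of the identical profile $\succ'$ (the paper's Algorithm~\ref{alg:Seqidentical}), and the induction via Lemmas~\ref{lem:indep} and~\ref{lem:equal} concluding $\pi(\succ)=\pi(\succ')$. Your explicit verification that the endogenous next-dictator function evaluates identically under $\succ$ and $\succ'$ --- because $f_i$ depends only on predecessors' allocations, which are equal by the induction hypothesis --- is precisely the subtlety the paper's argument uses implicitly, so there is no substantive divergence.
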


	\begin{table}[t]
	\centering
	\begin{tabular}{|c c|}
		\hline 
		$\succ_{1}: $ & $ \boxed{a} \succ c \succ b$ \\ 
		
		$\succ_{2}: $ & $ \boxed{c} \succ b \succ a$ \\ 
		
		$\succ_{3}: $ & $ c \succ a \succ \boxed{b}$ \\ 
		\hline 
	\end{tabular}
	\caption{An example showing a mechanism that is group-strategyproof but not reallocation-proof.}
	\label{tab:reallocation}
\end{table}

\section{Randomized Quota Mechanisms}

So far we identified the class of deterministic strategyproof, non-bossy, and Pareto C-efficient quota mechanisms. 
However, deterministic quota mechanisms generally have poor fairness properties: the first dictator always has a strong advantage over the next dictator and so on. This unfairness could escalate when an agent gets to pick more objects than the successor agent, that is, $q_{i} > q_{j}$ for $i < j$.
Thus, while any profile-independent randomization over a set of serially dictatorial mechanisms still maintains the incentive property, randomization over priority orderings seem to be a proper way of restoring some measure of randomized fairness.

We first need to define a few additional properties in the randomized settings.
A random allocation is a stochastic matrix $A$ with $\sum_{i\in N} A_{i,j} = 1$ for each $j\in M$. This feasibility condition guarantees that the probability of assigning each object is a proper probability distribution. Moreover, every random allocation is a convex combination of deterministic allocations and is induced by a lottery over deterministic allocations~\cite{von1953certain}.
Hence, we can focus on mechanisms that guarantee Pareto C-efficient solutions ex post.

\begin{definition}[Ex Post C-Efficiency]
	A random allocation is ex post C-efficient if it can be represented as a probability distribution over deterministic Pareto C-efficient allocations.
\end{definition}

The support of any lottery representation of a strategyproof allocation mechanism must consist entirely of strategyproof deterministic mechanisms. Moreover, if the distribution over orderings does not depend on the submitted preferences of the agents, then such randomized mechanisms are strategyproof~\cite{roth1992two}.

We focus our attention on the \emph{downward} lexicographic dominance relation to compare the quality of two random allocations when preferences are lexicographic.\footnote{In the general domain, this measure corresponds to a stronger notion based on first-order stochastic dominance~\cite{bogomolnaia2001new,hadar1969rules}}
Given two allocations, an agent prefers the one in which there is a higher probability for getting the most-preferred object.
Formally, given a preference ordering $\succ_{i}$, agent $i$ prefers any allocation $A_{i}$ that assigns a higher probability to her top ranked object $A_{i,o_{1}}$ over any assignment $B_{i}$ with $B_{i,o_{1}} < A_{i,o_{1}}$, regardless of the assigned probabilities to all other objects.
Only when two assignments allocate the same probability to the top object will the agent consider the next-ranked object. Throughout this paper we focus on the downward lexicographic relation, as opposed to upward lexicographic relation \cite{Cho2016}. The downward lexicographic notion compares random allocations by comparing the probabilities assigned to objects in order of preference. Thus, it is a more natural way of comparing allocations and has shown to be often used in consumer markets and other settings involving human decision makers \cite{kahn1995exploratory,yee2007greedoid,tversky1988contingent}.


\begin{definition}
	Agent $i$ with preference $\succ_{i}$ downward lexicographically prefers random allocation $A_{i}$ to $B_{i}$ if 
	\begin{equation*}
		\exists\ \ell \in M: A_{i,\ell} > B_{i, \ell}\ \wedge\ \forall k \succ_{i} \ell:  A_{i,k} = B_{i, k}.
	\end{equation*} 
\end{definition}

We say that allocation $A$ \textbf{downward lexicographically dominates} another allocation $B$ if there exists no agent $i\in N$ that lexicographically prefers $B_{i}$ to $A_{i}$.
Thus, an allocation mechanism is downward lexicographically efficient (\emph{ld-efficient}) if for all preference profiles its induced allocation is not downward lexicographically dominated by any other random allocation. 
We can see that efficiency under general preferences immediately implies ld-efficiency under lexicographic preferences. However, some allocations may only guarantee efficiency when preferences are lexicographic.

\begin{example}\label{example:ld-domination}
	Consider four agents $N = \{1,2,3,4\}$ and four objects $M = \{a,b,c,d\}$ with quota $q=(1,1,1,1)$ at the following preference profile $\succ = ((cabd), (acdb), (cbda), (acbd))$. Note that preferences are only defined over single objects, and we write $(cabd)$ as a shorthand form of $\succ_{1} = c\succ a\succ b\succ d$.
	\begin{table}[t]
		\tabcolsep=0.1cm 
		\begin{subtable}{0.49\linewidth}
			\centering
			\begin{tabular}{ c c c c c}
				\hline
				& $a$ & $b$ & $c$ & $d$\\ \hline \hline
				$A_1$ & $0$ & $1/3$ & $1/2$ & $1/6$ \\
				$A_2$ & $1/2$ & $0$ & $0$ & $1/2$ \\
				$A_3$ & $0$ & $1/3$ & $1/2$ & $1/6$ \\
				$A_4$ & $1/2$ & $1/3$ & $0$ & $1/6$ \\
				\hline
			\end{tabular}
			\caption{sd-efficient allocation}
			\label{sdExample}
		\end{subtable}
		\begin{subtable}{0.49\linewidth}
			\centering
			\begin{tabular}{ c c c c c}
				\hline
				& $a$ & $b$ & $c$ & $d$\\ \hline \hline
				$A_1$ & $1/12$ & $1/3$  & $5/12$ & $1/6$  \\
				$A_2$ & $11/24$ & $0$  & $1/12$ & $11/24$  \\
				$A_3$ & $0$ & $5/12$ & $5/12$  & $1/6$ \\
				$A_4$ & $11/24$ & $1/4$ & $1/12$  & $5/24$ \\
				\hline
			\end{tabular}
			\caption{ld-dominated but not sd-dominated}
			\label{ldExample}
		\end{subtable} 
		\caption{An example showing an allocation that is ld-efficient but not sd-efficient.}
		\label{tab:ld}
	\end{table}
	Table~\ref{tab:ld} shows the stochastic efficient allocation in comparison with ld-efficient allocation.
	Here, even though the allocation in Table \ref{ldExample} is ld-dominated by the sd-efficient allocation, it is not stochastically dominated under the first-order stochastic dominance. This is because agent 2 (similarly agent 4) weakly prefers the allocation in Table \ref{ldExample} if only considering her first two top objects.
	Thus, the two random allocations are in fact incomparable with respect to stochastic dominance.
\end{example}

Given an allocation $A$, we say that agent $i$ is envious of agent $j$'s allocation if agent $i$ prefers $A_{j}$ to her own allocation $A_{i}$. Thus, an allocation is envyfree when no agent is envious of another agent's assignment. Formally we write,
\begin{definition}
	Allocation $A$ is {envyfree} if for all agents $i \in N$, there exists no agent-object pair $j\in N$, $\ell \in M$ such that,
	\begin{gather*}
		A_{j,\ell} > A_{i, \ell}\ \wedge\ \forall k \succ_{i} \ell:  A_{i,k} = A_{j,k}
	\end{gather*} 
\end{definition}
A mechanism is envyfree if at all preference profiles $\succ\in\mathcal{P}^{n}$ it induces an envyfree allocation.

\subsection{Random Serial Dictatorship Quota Mechanisms}

Recall that $|q|$ denotes the number of agents that are assigned at least one object.
Given a quota of size $|q|$, there are ${n \choose |q|}\times |q|!$ permutations (sequences without repetition) of $|q|$ agents from $N$. 
Thus, a \emph{Random Serial Dictatorship mechanism with quota $q$} is a uniform randomization over all permutations of size $|q|$. Formally, 

\begin{definition}[Random Serial Dictatorship Quota Mechanism (RSDQ)]
	Let $\mathbb{P}(N)$ be the power set of $N$, and $f\in \mathbb{P}(N)$ be any subset of $N$. Given a preference profile $\succ\in \mathcal{P}^{n}$, a random serial dictatorship with quota $q$ is a convex combination of serial dictatorship quota mechanisms and is defined as
	\begin{equation}
		\frac{\sum_{f\in \mathbb{P}(N):|f|= |q|} \pi_{f}(\succ)}{{n \choose |q|}\times |q|!}
	\end{equation}
\end{definition} 

In this randomized mechanism agents are allowed to pick more than one object according to $q$ and not all the agents may be allocated ex post. We can think of such mechanisms as extending the well-known Random Serial Dictatorship (RSD) for the house assignment problem wherein each agent is entitled to receive exactly one object. Thus, an RSD mechanism is a special case of our quota mechanism with $q_i = 1, \forall i \in N$ and $|q| = n$.

\begin{example}
	Consider three agents and four objects. Agents' preferences and the probabilistic allocation induced by RSDQ with quota $q=(2,1,1)$ are presented in Table \ref{exmpl:RDPexample}.
	Note that the size of $q$ can potentially be smaller than the number of agents, meaning that some agents may receive no objects ex post.
	
	\begin{minipage}{\textwidth}
		\centering
		
		\begin{minipage}{0.45\linewidth}
			\centering
			\begin{tabular}{|c|c|}
				\hline 
				$\succ_1$ & $c \succ a \succ b \succ d$ \\ 
				\hline 
				$\succ_2$ & $a \succ c \succ d \succ b$ \\ 
				\hline 
				$\succ_3$ & $c \succ b \succ d \succ a$ \\ 
				\hline 
			\end{tabular} 
		\end{minipage}
		\begin{minipage}{0.45\linewidth}
			\centering
			\begin{tabular}{ c c c c c}
				\hline
				& $a$ & $b$ & $c$ & $d$ \\ \hline \hline
				$A_1$ & $3/6$ & $1/6$ & $2/6$ & $2/6$  \\
				$A_2$ & $3/6$ & $0$ & $2/6$ & $3/6$  \\
				$A_3$ & $0$ & $5/6$ & $2/6$ & $1/6$  \\
				\hline
			\end{tabular}
		\end{minipage}
		\captionof{table}{RSDQ allocation with $q = (2,1,1)$.}  \label{exmpl:RDPexample}
	\end{minipage}
	

\end{example}

The weakest notion of fairness in randomized settings is the equal treatment of equals. We say an allocation is fair (in terms of equal treatment of equals) if it assigns an identical random allocation (lottery) to agents with identical preferences.

\begin{theorem}\label{thm:anySD}
	Take any serial dictatorship mechanism $\pi$ with a quota $q$. A uniform randomization over all permutations of orderings with size $|q|$ is strategyproof, ex post C-efficient, and fair (equal treatment of equals).
	
\end{theorem}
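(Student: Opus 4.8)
The plan is to verify the three claimed properties separately, exploiting the fact that, by definition, RSDQ is a uniform convex combination of the serial dictatorship quota mechanisms $\pi_f$ ranging over all orderings $f$ of size $|q|$, with weights that do \emph{not} depend on the reported profile. Throughout, write $F$ for the set of such orderings and, for a fixed profile, let $\mathbf{1}[S]$ denote the $0/1$ indicator (probability) vector of a set $S$ of objects.

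I would dispose of ex post C-efficiency first, since it is immediate. RSDQ places all of its probability mass on the deterministic allocations $\{\pi_f(\succ)\}_{f\in F}$, and by Proposition~\ref{prop:ParetoEfficient} each serial dictatorship quota mechanism is Pareto C-efficient under lexicographic preferences. Hence the support of the RSDQ lottery consists entirely of Pareto C-efficient allocations, which is exactly the definition of ex post C-efficiency; nothing further is needed.

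For strategyproofness I would use a coupling argument. Fix an agent $i$, the others' reports $\succ_{-i}$, and an arbitrary misreport $\succ'_i$, and couple the truthful and manipulated runs of RSDQ by drawing the \emph{same} uniformly random ordering $f$ for both. For every realized $f$ the deterministic mechanism $\pi_f$ is a serial dictatorship and is therefore strategyproof, so the set $S_f$ assigned to $i$ under $(\succ_i,\succ_{-i})$ is weakly preferred by $i$, in her lexicographic set ranking, to the set $T_f$ assigned to $i$ under $(\succ'_i,\succ_{-i})$. The remaining task is to upgrade this realization-wise dominance to downward lexicographic dominance of the induced marginals $A_i=\tfrac{1}{|F|}\sum_f \mathbf{1}[S_f]$ and $A'_i=\tfrac{1}{|F|}\sum_f \mathbf{1}[T_f]$. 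I would do this by induction along $i$'s preference order $o_1,o_2,\dots$: since $S_f\succeq_i T_f$, any realization with $o_1\in T_f$ must also have $o_1\in S_f$ (otherwise the missing top object would force $T_f\succ_i S_f$ by the lexicographic definition in Lemma~\ref{Lem:lex}), so $A_{i,o_1}\ge A'_{i,o_1}$; if equality holds the two events coincide, and I repeat the argument conditioned on the status of $o_1$ to compare $o_2$, and so on. This shows $A_i$ downward lexicographically dominates $A'_i$, so $i$ cannot gain by deviating. The report-independence of the weights is precisely what legitimizes the coupling, matching the remark made after the definition of ex post C-efficiency.

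Equal treatment of equals I would obtain from a symmetry of the uniform lottery. Suppose $\succ_i=\succ_j$ and let $\sigma$ be the transposition of $i$ and $j$ acting on orderings by relabelling; then $\sigma$ is a weight-preserving bijection of $F$. Because $i$ and $j$ carry identical preferences, the greedy pick made at every position is unaffected by swapping them, so $\pi_{\sigma(f)}$ assigns to $i$ exactly the set that $\pi_f$ assigns to $j$ (both empty when the relevant agent is absent from the ordering). Re-indexing the sum over $F$ by $\sigma$ then gives $A_i=A_j$, i.e.\ identical preferences yield identical lotteries. The main obstacle is the strategyproofness step: everything reduces to the aggregation lemma that turns realization-wise lexicographic dominance into downward lexicographic dominance of the mixtures, which requires the careful induction along the preference order sketched above; the efficiency and fairness parts are bookkeeping once the convex-combination and swap symmetry of RSDQ are made explicit.
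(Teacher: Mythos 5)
Your proposal is correct and takes essentially the same route as the paper's (very terse) proof: ex post C-efficiency because the lottery's support consists of Pareto C-efficient serial dictatorship allocations (Proposition~\ref{prop:ParetoEfficient}), strategyproofness because the uniform distribution over orderings is report-independent and each realized $\pi_{f}$ is strategyproof, and equal treatment of equals as a symmetry consequence of the uniform randomization. The only difference is one of rigor, not of route: where the paper simply asserts (citing \cite{roth1992two}) that a report-independent mixture of strategyproof mechanisms is strategyproof, you explicitly prove the aggregation step for the downward lexicographic extension via the coupling and the induction along $i$'s preference order---a sound and worthwhile filling-in of a step the paper leaves implicit.
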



Now, we present our main result for envyfreeness of RSDQ regardless of the selected quota system.

\begin{theorem}\label{thm:RSDQ-Envy}
	Random Serial Dictatorship Quota mechanism is envyfree with any quota $q$, under downward lexicographic preferences.
\end{theorem}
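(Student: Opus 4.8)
The plan is to reduce the statement to a family of pairwise comparisons and then exploit an additivity property of downward lexicographic dominance. By definition, RSDQ is envyfree exactly when, for every ordered pair $i,j\in N$, agent $i$'s row $A_i$ of the RSDQ matrix is not downward lexicographically worse than $A_j$ when both are read in $i$'s preference order; equivalently, at the first object on which the two rows differ, $i$'s row carries the larger probability. I would first isolate the elementary but crucial fact that downward lexicographic dominance in a fixed object order is preserved under addition: if $u^{(t)}$ dominates $v^{(t)}$ in $i$'s order for every $t$, then $\sum_t u^{(t)}$ dominates $\sum_t v^{(t)}$, because at the first coordinate where each pair separates the dominant vector is strictly larger, so the first coordinate on which the sums differ again favours $i$. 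This lets me establish dominance of averages summand-by-summand, once the summands are grouped correctly.

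The decomposition I would use conditions on the identity of the first dictator. A uniformly random ordering of size $|q|$ places each agent first with probability $1/n$, and, conditioned on $g$ being first, $g$ receives exactly her top-$q_1$ bundle $S_g$ while the remaining agents run RSDQ on the strictly smaller market $(N\setminus\{g\},\, M\setminus S_g,\, (q_2,\ldots,q_{|q|}))$. Hence $A_i=\frac1n\sum_{g\in N}A_i^{(g)}$, where $A_i^{(g)}$ is the conditional allocation, and I would induct on the size of the market. For $g\notin\{i,j\}$ both agents survive into a smaller RSDQ market, so the induction hypothesis gives that $A_i^{(g)}$ dominates $A_j^{(g)}$ in $i$'s order. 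By the additivity lemma it then remains to dominate the two boundary terms, where $i$ or $j$ is itself the first dictator.

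When $i$ is first, $A_i^{(i)}=\mathbf 1_{S_i}$ is the indicator of $i$'s top-$q_1$ set, and when $j$ is first, $A_j^{(j)}=\mathbf 1_{S_j}$. I would pair these two terms and compare $\mathbf 1_{S_i}+A_i^{(j)}$ against $A_j^{(i)}+\mathbf 1_{S_j}$ in $i$'s order. Reading through $i$'s most preferred objects (the elements of $S_i$, which are precisely the first $q_1$ objects of $i$'s order): on $S_i\setminus S_j$ the left side is at least $1$ while the right side is $0$, and on $S_i\cap S_j$ both sides equal $1$, so the first object of $i$'s order lying in $S_i\setminus S_j$ already settles the comparison strictly in $i$'s favour. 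The only remaining possibility is $S_i=S_j$, in which case the boundary comparison collapses to comparing $i$'s allocation in the reduced market with cohort $N\setminus\{j\}$ against $j$'s allocation in the reduced market with cohort $N\setminus\{i\}$ on the same objects and quota. This is the main obstacle, and I would resolve it with an auxiliary twin-market dominance lemma.

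The twin-market lemma would state that for a common set of rival agents $W$, and identical objects and quota, $i$'s RSDQ allocation in cohort $W\cup\{i\}$ downward lexicographically dominates, in $i$'s preference order, $j$'s RSDQ allocation in cohort $W\cup\{j\}$. I would prove it by the same first-dictator induction, matching first dictators across the two cohorts ($w\leftrightarrow w$ for $w\in W$, and $i\leftrightarrow j$): a common rival $w$ removes the identical set $S_w$ and yields a smaller twin-market instance handled by the induction hypothesis, while the $i$-first/$j$-first match reduces to $\mathbf 1_{S_i}$ versus $\mathbf 1_{S_j}$, which $i$'s top set dominates in $i$'s order; additivity closes it. Feeding this lemma into the $S_i=S_j$ case completes the boundary estimate and hence the induction, giving pairwise and therefore full envyfreeness. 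I expect the delicate points to be (i) stating and applying the additivity-under-summation property exactly right, and (ii) organising the same-market and twin-market inductions jointly on the market size, so that each statement is only ever invoked at strictly smaller size and no circularity arises.
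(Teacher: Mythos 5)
Your proposal is correct, but it takes a genuinely different route from the paper. The paper argues by contradiction: supposing $A_{j,\ell}>A_{i,\ell}$ at the first object $\ell$ (in $i$'s order) where the rows differ, it couples each ordering $f$ in which $i$ precedes $j$ (or appears without $j$) with the ordering $\bar f$ obtained by transposing $i$ and $j$, asserts that for any object $y$ either $i$ receives $y$ under $\pi_f$ and $j$ receives $y$ under $\pi_{\bar f}$ or neither does in either, and from this extracts an object $\ell'\succ_i\ell$ with $A_{i,\ell'}>A_{j,\ell'}$, contradicting the assumed equality above $\ell$. Your first-dictator recursion replaces this global transposition coupling with a purely local one, and that is a real improvement in rigor: after transposing $i$ and $j$ in a full ordering, the dictators seated between them face different remaining objects and may pick differently, so $i$'s bundle under $f$ and $j$'s bundle under $\bar f$ need not coincide object-by-object as the paper's exchange claim suggests. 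Your induction only ever couples at the first seat, where the (empty) history is trivially identical, and since each agent appears at most once in an ordering, the moment $i$ or $j$ acts her bundle is final, so the subsequent divergence of the two conditional markets is irrelevant --- exactly what your twin-market lemma formalizes, and precisely the step that carries the $S_i=S_j$ boundary case. Your additivity lemma is sound (take the minimum over the summands' first-disagreement indices; pairs that are identical contribute equally everywhere, so weak dominance must be allowed, as you do), the $1/n$ first-seat probability is correct even when $|q|<n$, and the joint induction is well-founded because the twin-market lemma is proved by its own self-contained induction and both statements are invoked only at strictly smaller markets. In exchange for being longer, your argument buys a fully airtight proof that sidesteps the paper's fragile object-exchange claim, handles unassigned agents ($|q|\le n$) uniformly, and isolates a reusable twin-market dominance lemma; the paper's proof buys brevity and a direct identification of the witnessing object $\ell'$, at the cost of a coupling step that, as literally stated, fails in general.
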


\begin{proof}
	Let $A$ denote a random allocation induced by RSDQ with quota $q$ at an arbitrary preference profile $\succ \in \mathcal{P}^{n}$. 
	Suppose for contradiction that there exists an agent $i\in N$ with random allocation $A_{i}$ that prefers another agent's random allocation $A_{j}$ to her own assignment, that is, $A_{j} \succ_{i} A_{i}$.
	%
	Assuming that preferences are downward lexicographic, there exists an object $\ell$ such that $A_{j,\ell} > A_{i, \ell}$ and for all objects that are ranked higher than $\ell$ (if any) they both receive the same probability $\forall k \succ_{i} \ell:  A_{i,k} = A_{j,k}$. Thus, we can write:
	$
	\sum_{x\in A_{i}: x\succ_{i} \ell} A_{j,x} = \sum_{x\in A_{i}: x \succ_{i} \ell} A_{i,x}
	$. %
	Since preferences are lexicographic, the assignments of objects less preferred to $\ell$ become irrelevant because for two allocations $A_{i}$ and $B_{i}$ such that $A_{i,\ell} > B_{i,\ell}$, we have $A_{i} \succ_{i} B_{i}$ for all $x\prec_{i} \ell$ where $B_{i,x} \geq A_{i,x}$. Thus, we need only focus on object $\ell$.
	
	Let $\mathcal{F}$ denote the set of all orderings of agents where $i$ is ordered before $j$ or $i$ appears but not $j$. Note that since we allow for $|q| = |f| \leq n$, some agents could be left unassigned, and permuting $i$ and $j$ could imply that one is not chosen under ${n \choose |q|}$.
	For any ordering $f\in \mathcal{F}$ of agents where $i$ precedes $j$, let $\bar{f} \in \bar{\mathcal{F}}$ be the ordering obtained from $f$ by swapping $i$ and $j$. Clearly, $|\mathcal{F}| = |\bar{\mathcal{F}}|$ and the union of the two sets constitute the set of orderings that at least one of $i$ or $j$ (or both) is present. Fixing the preferences, we can only focus on $f$ and $\bar{f}$.
	
	Let $\pi_{f}(\succ)$ be the serial dictatorship with quota $q$ and ordering $f$ at $\succ$. 
	RSDQ is a convex combination of such deterministic allocations with equal probability of choosing an ordering from any of $\mathcal{F}$ or $\bar{\mathcal{F}}$. 
	
	Given any object $y\in M$, either $i$ receives $y$ in $\pi_{f}$ and $j$ gets $y$ in $\pi_{\bar{f}}$, or none of the two gets $y$ in any of $\pi_{f}$ and $\pi_{\bar{f}}$.
	Thus, object $\ell$ is either assigned to $i$ in $\pi_{f}$ and to $j$ in $\pi_{\bar{f}}$, or is assigned to another agent.
	If $i$ gets $\ell$ in $\pi_{f}$ for all $f\in \mathcal{F}$, then $j$ receives $\ell$ in $\pi_{\bar{f}}$. The contradiction assumption $A_{j,\ell} > A_{i,\ell}$ implies that there exists an ordering $f$ where $i$ receives a set of size $q_{i}$ that does not include object $\ell$ while $j$'s allocation set includes $\ell$. Let $X_{i}$ denote this set for agent $i$ and $X_{j}$ for agent $j$. Then, $X_{i} \succ_{i} X_{j}$. Thus, by definition there exists an object $\ell' \in X_{i}$ such that $\ell' \succ_{i} \ell$, where $\ell' \not\in X_{j}$. Thus, the probability of assigning object $\ell' \succ_{i} \ell$ 
	to $i$ is strictly greater than assigning it to $j$, that is, $A_{i,\ell'} > A_{j, \ell'}$. However, by lexicographic assumption we must have $\forall k \succ_{i} \ell:  A_{i,k} = A_{j,k}$, which is a contradiction. \qed
\end{proof}

\begin{theorem}\label{thm:RSDQ-Charac}
	Under downward lexicographic preferences, a Random Serial Dictatorship Quota mechanism is ex post C-efficient, strategyproof, and envyfree for any number of agents and objects and any quota system.
\end{theorem}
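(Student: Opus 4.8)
The plan is to recognize that this theorem is an immediate consequence of combining the two results established just prior, namely Theorem~\ref{thm:anySD} and Theorem~\ref{thm:RSDQ-Envy}, since the RSDQ mechanism is by definition the uniform randomization over all permutations of orderings of size $|q|$. The task reduces to confirming that both prior theorems refer to exactly this object and then conjoining their conclusions.

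First I would observe that the mechanism described in the statement of Theorem~\ref{thm:anySD}---a uniform randomization over all permutations of orderings with size $|q|$---is precisely the RSDQ mechanism given in its defining equation. Theorem~\ref{thm:anySD} therefore directly yields that RSDQ is strategyproof and ex post C-efficient for any quota $q$. The strategyproofness hinges on the fact that the distribution over orderings is profile-independent (a uniform lottery), so the support of the induced lottery consists entirely of strategyproof deterministic serial dictatorship quota mechanisms, as noted in the remark following the definition of ex post C-efficiency. Ex post C-efficiency follows because each serial dictatorship in the support yields a Pareto C-efficient deterministic allocation by Proposition~\ref{prop:ParetoEfficient}, and a uniform lottery over such allocations is by definition ex post C-efficient.

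Second, for envyfreeness I would simply invoke Theorem~\ref{thm:RSDQ-Envy}, which already establishes that RSDQ is envyfree under downward lexicographic preferences for any quota $q$, without restriction on the number of agents or objects.

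There is essentially no substantive obstacle here beyond verifying that the two prior theorems speak about the same mechanism: both quantify over arbitrary quota systems and impose no restriction on $n$ or $m$, and the randomization in Theorem~\ref{thm:anySD} coincides with the RSDQ definition. Once this identification is made, the three claimed properties---ex post C-efficiency, strategyproofness, and envyfreeness---are exactly the conjunction of the conclusions of Theorem~\ref{thm:anySD} and Theorem~\ref{thm:RSDQ-Envy}, which completes the proof. The only point requiring mild care is that envyfreeness is the one property genuinely relying on the lexicographic structure (via Theorem~\ref{thm:RSDQ-Envy}), whereas the other two hold more broadly; stating the theorem under downward lexicographic preferences ensures all three hypotheses are simultaneously in force.
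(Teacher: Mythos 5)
Your proposal is correct and matches the paper's own treatment: the paper states Theorem~\ref{thm:RSDQ-Charac} without a separate proof precisely because it is the conjunction of Theorem~\ref{thm:anySD} (strategyproofness and ex post C-efficiency of the uniform randomization over orderings of size $|q|$, which is definitionally RSDQ) and Theorem~\ref{thm:RSDQ-Envy} (envyfreeness). One small caveat in your closing remark: ex post C-efficiency does \emph{not} hold more broadly than the lexicographic domain, since by Proposition~\ref{prop:pareto} serial dictatorship quota mechanisms with general preferences can fail Pareto C-efficiency, so Proposition~\ref{prop:ParetoEfficient} (and hence the lexicographic hypothesis) is genuinely needed for that property too, not only for envyfreeness.
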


The well-known random serial dictatorship mechanism (RSD), also known as Random Priority, is defined when $n=m$ and assigns a single object to agents~\cite{abdulkadirouglu1998random}. It is apparent that RSD is a special instance from the class of RSDQ mechanisms. 

\begin{corollary} \label{cor:RSD}
	RSD is ex post efficient, strategyproof, and envyfree when preferences are downward lexicographic.
\end{corollary}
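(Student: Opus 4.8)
The plan is to derive the corollary as a direct specialization of Theorem~\ref{thm:RSDQ-Charac}. First I would observe that the classical RSD mechanism is precisely the RSDQ mechanism obtained by setting $q_i = 1$ for every agent $i \in N$ and taking $|q| = n = m$: each of the $n$ orderings of the agents is drawn with equal probability, and along each ordering every dictator in turn claims her single most-preferred remaining object. Since the RSDQ definition is a uniform convex combination of serial dictatorship quota mechanisms over all orderings of size $|q|$, plugging in this unit quota gives ${n \choose |q|} \times |q|! = {n \choose n} \times n! = n!$ orderings, recovering exactly the usual Random Priority lottery.

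With this identification in hand, the three properties follow immediately from Theorem~\ref{thm:RSDQ-Charac}, which establishes that \emph{any} RSDQ mechanism under downward lexicographic preferences is ex post C-efficient, strategyproof, and envyfree for any number of agents and objects and any quota system --- in particular for the unit-quota system just described. Strategyproofness and envyfreeness transfer verbatim, as these properties are stated without reference to the value of $C$.

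The only point requiring a remark is the efficiency notion: Theorem~\ref{thm:RSDQ-Charac} guarantees ex post C-efficiency, whereas the corollary asserts plain ex post efficiency. But for the unit quota we have $C = \sum_i q_i = n = m$, so every object is allocated, and by the observation following the definition of Pareto C-efficiency a Pareto C-efficient allocation is also Pareto efficient whenever $\sum_i q_i = m$. Hence each deterministic allocation in the support of the RSD lottery is Pareto efficient in the unrestricted sense, and ex post C-efficiency collapses onto ex post efficiency. There is essentially no obstacle in this argument; its entire content lies in recognizing RSD as the $n=m$, unit-quota instance of RSDQ and in checking that the two efficiency notions coincide in that regime.
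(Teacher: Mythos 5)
Your proposal is correct and follows essentially the same route as the paper's own proof: identifying RSD as the RSDQ instance with $q_i = 1$ for all $i$ and $\sum_i q_i = m$, then invoking the RSDQ results (Theorem~\ref{thm:RSDQ-Charac}). Your extra remark that Pareto C-efficiency coincides with Pareto efficiency when $C = m$ is a welcome explicit check of a point the paper leaves implicit via its observation following the definition of Pareto C-efficiency.
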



\section{Discussion}

We investigated strategyproof allocation mechanisms when agents with lexicographic preferences may receive more than one object according to a quota.
The class of sequential quota mechanisms enables the social planner to choose any quota without any limitations. 
For the general domain of preferences, however, the class of strategyproof, non-bossy, and Pareto efficient mechanisms is restricted to sequential dictatorships with equal quota sizes. Demanding neutrality, the set of such mechanisms gets restricted to quasi-dictatorial mechanisms, which are far more unfair~\cite{papai2000strategyproof,papai2001strategyproof}. Thus, such mechanisms limit a social planner to specific quota systems while demanding the complete allocation of all available objects.

We showed that the class of strategyproof allocation mechanisms that satisfy neutrality, Pareto C-efficiency, and non-bossiness expands significantly when preferences are lexicographic. Our characterization shows that serial dictatorship quota mechanisms are the only mechanisms satisfying these properties in the multiple-assignment problem. Removing the neutrality requirement, this class of mechanisms further expands to sequential dictatorship quota mechanisms.

To recover some level of fairness, we extended the serial dictatorship quota mechanisms to randomized settings and showed that randomization can help achieve some level of stochastic symmetry amongst the agents. More importantly, we showed that RSDQ mechanisms satisfy strategyproofness, ex post C-efficiency, and envyfreeness for any number of agents, objects, and quota systems when preferences are downward lexicographic. 
The envyfreeness result is noteworthy: it shows that in contrast to the Probabilistic Serial rule (PS)~\cite{bogomolnaia2001new} which satisfies strategyproofness when preferences are lexicographic only when $n \geq m$~\cite{schulman2012allocation}, the well-known RSD mechanism in the standard assignment problem is envyfree for any combination of $n$ and $m$. These results address the two open questions about the existence of a mechanism with more favorable fairness and strategyproofness properties~\cite{papai2000strategyproof,schulman2012allocation}.

%
%
%

Serial dictatorship mechanisms are widely used in practice since they are easy to implement while providing stability and strategyproofness guarantees~\cite{roth1995handbook}. Serial dictatorship quota mechanisms and their randomized counterparts provide a richer framework for multiple allocation problems while creating the possibility of fair and envyfree assignments.
Our characterization for deterministic quota mechanisms when preferences are lexicographic justifies the use of quotas in sequential settings. 
In randomized settings, however, an open question is whether RSDQ mechanisms are the only allocation rules that satisfy the above properties in the multiple assignment domain. Of course, answering this question, first, requires addressing the open question by Bade~\cite{bade2014random} in the standard assignment problem (where every agent gets at most one object): \textit{is random serial dictatorship a unique mechanism that satisfies strategyproofness, ex post efficiency, and equal treatment of equals?}

%
%
%

 
\bibliography{references}
\bibliographystyle{splncs}

%

\clearpage
\appendix
\spnewtheorem*{defn}{Definition}{\bfseries}{\rmfamily}
\spnewtheorem*{prop}{Proposition}{\bfseries}{\rmfamily}
\spnewtheorem*{thm}{Theorem}{\bfseries}{\rmfamily}
\spnewtheorem*{lem}{Lemma}{\bfseries}{\rmfamily}

\section{Missing Proofs and Algorithms}
In this appendix we include all proofs that were missing in the main part of the paper, along with the pseudocode used to construct particular preference profiles that are needed in some of the proofs.

\subsection{Proof of Theorem \ref{thm:noInterleaving}}

\begin{proof}
	The proof follows by constructing a manipulable preference profile. Given any Pareto C-efficient and non-bossy interleaving mechanism, we show that we can construct an instance (preference profile) at which at least one agent can manipulate the outcome. 
	
	Suppose there is a non-bossy and Pareto C-efficient mechanism $\pi$ with at least one alternation between agents $i$ and $j$. Note that the alternation could be through a fixed ordering or through a picking process. Since we are constructing an instance, we can assume that all other agents $k\in N\setminus \{i,j\}$ will receive their objects after agents $i$ and $j$ (or have already received their non-conflicting objects before the two). We now  construct a preference profile such that $\succ = (\succ_{i}, \succ_{j}, \succ_{N\setminus \{i,j\} })$. 
	
	Let $f_{k}$ denote the agent in the $k$th picking order, that is, $f_{2} = i$ indicates that the agent in the second picking order is agent $i$.
	Consider the ordering such that for agents 1 and 2 we have $f_1 = f_3 = 1$ and $f_2 = 2$. Assume there are 3 objects available and construct a preference profile as follows: $\succ_1 = a \succ b \succ c$ and $\succ_{2} = o_{1} \succ o_{2} \succ o_{3}$, where $o_{k}$ represent the $k$th ranked object in $\succ_2$.
	By Pareto C-efficiency and non-bossiness of $\pi$, agents final allocations must preclude any further exchange between the two agents, and no agent can change the allocation of the other while its own allocation remains unchanged.
	
	Since agent 1 picks first and last according to ordering $f$, agent 1 can pick her first choice either at stage 1 or 3 as long as agent 2's top choice is not equal to that of agent 1, \ie $o_{1} \in \{b,c\}$. If $o_{1} = c$ then there is no conflict between agent 1 and 2 and playing truthfully has the best outcome for agent 1. Thus, it follows that $o_{1} = b$ and $o_{2} \in \{a,c\}$. Now we need to construct the rest of agent 2's ordering such that agent 1's top choice, object $a$, remains in the pool of objects until the last stage. Thus, for the following profile $\succ_1= a \succ b \succ c$ and $\succ_2 = b \succ c \succ a$, the interleaving mechanism is manipulable. This implies that no Pareto C-efficient and non-bossy interleaving mechanism guarantees strategyproofness. \qed
\end{proof}

\subsection{Proof of Lemma \ref{Lem:lex}}

\begin{proof}
	We provide proof for each of the statements in the lexicographic domain.
	\begin{itemize}
		\item Since $B_{i} \subset A_{i}$ then all objects in $B_{i}$ are also in $A_{i}$, and there exists an object $x \in A_{i}$ such that $x \notin B_{i}$. By the definition of lexicographic preferences, having an object is preferred to not having the object (\ie objects are goods). Therefore, $A_{i} \succ_{i} B_{i}$.
		
		\item It is easy to see that adding a set of object $X \cap A_{i} = \emptyset$ to two sets such that $A_{i} \succ_{i} B_{i}$ maintains the preference over the two sets.
		This is because elements in $X$ are added to both sets and by assumption there is still an element $x\in A_{i}$ and $x\notin X$ that is preferred to all objects in $B_{i}$.
		We should prove the converse that if $A_{i} \cup X \succ_{i} B_{i} \cup X$ then $A_{i} \succ_{i} B_{i}$. 
		Suppose not, that is $B_{i} \succeq_{i} A_{i}$. By adding $X = B_{i} - A_{i}$ to both sides we have $B_{i}\cup X \succeq_{i} A_{i} \cup X$, that is, $B_{i} \succeq_{i} A_{i} \cup B_{i}$, which contradicts the strong monotonicity of lexicographic preferences when $A_{i}$ is nonempty.
		
		\item Suppose that there does not exist an object $x\in A_{i}$ such that $x \succ_{i} X$ for all $X \in  \mathbb{P}(B_{i} - A_{i})$. The set $X$ can be any power set of $B_{i} - A_{i}$, and for the sake of this proof we assume that $X = B_{i} - A_{i}$.
		By the second statement in this lemma, for $A_{i} \succ_{i} B_{i}$, we can add any $X$ such that $X\cap A_{i} = \emptyset$ to the both sides and write $A_{i} \cup X \succ_{i} B_{i} \cup X$, which holds since $X = B_{i} - A_{i}$.
		This states that for any object $x \in B_{i}$, $x$ is also a member of $A_{i} \cup X$, implying that $B_{i} \subset A_{i} \cup X$. Note that $B_{i} \neq A_{i} \cup X$ because  $A_{i}$ is considered to be nonempty.
		Using the first statement in this lemma, if $B_{i} \subset A_{i} \cup X$ then $A_{i} \cup X \succ_{i} B_{i}$. Replacing $X$ with $B_{i} - A_{i}$ and subtracting it from both sides, we have $A_{i} \succ_{i} \emptyset$, which implies that there exists an object $x\in A_{i}$ such that $x \notin B_{i}$ and $x \succ_{i} B_{i} - A_{i}$, contradicting the initial assumption.
		%
		%
	\end{itemize} 
	The above items conclude our proof for the statements in this lemma. \qed
\end{proof}

\subsection{Proof of Lemma \ref{lem:indep}}

\begin{proof}
	For all $j < i$ we have $\pi_{f_{j}} (\succ) = \pi_{f_{j}}(\succ')$. By non-bossiness and strategyproofness, for all $\succ'_{j}$ such that $\pi_{j}(\succ) = \pi_{j}(\succ'_{j},\succ_{-j})$ we have $\pi(\succ) = \pi(\succ'_{j},\succ_{-j})$. 
	In words, non-bossiness and strategyproofness prevent any agent to change the allocation of other agents with lower priority (those who are ordered after him), without changing its own allocation.
	Let $M'$ be the set of remaining objects such that $M' =  M \setminus \bigcup_{k=1}^{j} \pi_{f_{k}}(\succ)$. Since $\pi_{f_{j}} (\succ) = \pi_{f_{j}}(\succ')$, the set of remaining objects $M'$ under $\succ'$ is equivalent to those under $\succ$, implying that $\pi_{f_{i}} (\succ) = \pi_{f_{i}}(\succ')$ which concludes the proof. \qed
\end{proof}

\subsection{Proof of Lemma~\ref{lem:equal}}

\begin{proof}
	
	The proof follows similar to Lemma 1 in \cite{svensson1999strategy}. 
	First, we show that $\pi(\succ'_{i}, \succ_{-i}) =  \pi(\succ)$, that is changing $i$'s preference only does not affect the outcome. From strategyproofness we know that 
	$
	\pi_{i}(\succ_{i}) \succeq_{i} \pi_{i}(\succ'_{i}, \succ_{-i})
	$. 
	By the lemma's assumption (if condition) we can also write 
	$
	\pi_{i}(\succ_{i}) \succeq'_{i} \pi_{i}(\succ'_{i}, \succ_{-i})
	$. 
	However, strategyproofness implies that 
	$
	\pi_{i}(\succ'_{i}, \succ_{-i}) \succeq'_{i} \pi_{i}(\succ_{i})
	$. 
	Since the preferences are strict, the only way for the above inequalities to hold is when $\pi_{i}(\succ'_{i}, \succ_{-i}) =  \pi_{i}(\succ)$. The non-bossiness of $\pi$ implies that $\pi(\succ'_{i}, \succ_{-i}) =  \pi(\succ)$.
	
	We need to show that the following argument holds for all agents. We do this by partitioning the preference profile into arbitrary partitions constructed partly from $\succ$ and partly from $\succ'$. Let $\succ^{p} = (\succ'_{1}, \ldots, \succ'_{p-1}, \succ_{p},\ldots, \succ_{n}) \in \mathcal{P}^{n}$. Thus, a sequence of preference profiles can be recursively written as $\succ^{p+1} = (\succ'_{p}, \succ^{p}_{-p})$.
	Using the first part of the proof and by the recursive representation, we can write $\pi(\succ^{p}) = \pi(\succ'_{p}, \succ^{p}_{-p}) = \pi(\succ^{p+1})$. Now using this representation, we shall write $\pi(\succ') = \pi(\succ^{n+1})$ and $\pi(\succ) = \pi(\succ^{1})$, which implies that $\pi(\succ) = \pi(\succ')$. \qed
\end{proof}

\subsection{Proof of Lemma \ref{lem:ordering}}

\begin{proof}
	Suppose the contrary and let $\succ$ be an identical preference profile $\succ_{1} = \succ_{2} = a \succ b \succ c$ such that agent 1 receives $a$ and $c$ while agent 2 receives $b$.
	For agents 1 and 2, assume that they both have received no other objects except the ones stated above (Alternatively, we can assume that the other objects received by these two agents so far are their highest ranked objects, and because these objects were assigned in some previous steps, they won't affect the assignment of the remaining objects). For all other agents $N\setminus \{1,2\}$ assume that the allocation remains unchanged, \ie, these agents will receive exactly the same objects after we change the preferences of agent 1. 
	By Lemma~\ref{lem:equal}, since the mechanism is non-bossy and strategyproof, agent 1's allocation remains unchanged under the following changes in its preference ordering:
	\begin{gather*}
	\succ_{1} = a \succ b \succ c \Rightarrow a \succ c \succ b \Rightarrow c \succ a \succ b
	\end{gather*}
	Thus, the new preference profile $\succ'$ would be
	\begin{equation*}
	\begin{array}{cc}
	\succ'_{1}:  &  \boxed{c} \succ \boxed{a} \succ b  \\ 
	\succ_{2}:  & a \succ \boxed{b} \succ c
	\end{array}
	\end{equation*}
	where $\pi(\succ') = \pi(\succ)$. The squares show the current allocation. Since agent 1 is receiving two objects and agent 2 receives one, for any ordering that is not prescribed by a serial dictatorship, agent 2 should be ordered second (otherwise, the ordering is a serial dictatorship).
	
	More specifically, orderings (1,2) and (2,1) are serial dictatorships. Since agent 2 must be ordered second, it must be the case that agent 1 goes first and third (otherwise we are back at (1,2), which results in a serial dictatorship).
	Agent 1 first chooses object $c$ according to $\succ'_{1}$, then agent 2 chooses object $a$ according to $\succ_{2}$, and lastly agent 1 chooses the remaining object $b$.
	Therefore, agent 2 can benefit from manipulating the mechanism by choosing $a$ instead of $b$, contradicting the assumption that $\pi$ is strategyproof and non-bossy. This implies that such agents cannot exist, and concludes our proof. \qed
\end{proof}

\subsection{Algorithm~\ref{alg:identical}}

	\begin{algorithm}[H]
		\caption{Constructing an identical preference profile}
		\label{alg:identical}
		\KwData{A preference profile $\succ$, an ordering $f$, and quota $q$}
		\KwResult{A profile with identical preferences $\succ'$ with $\pi(\succ') = \pi(\succ)$}
		Initialize $\succ_{1} \leftarrow \emptyset$ \\
		Initialize set $Z = \emptyset$\\
		\For{$(i \leftarrow 1$ \KwTo $|q|)$}{ 
			$Z \leftarrow \text{top}(q_{i}, \succ_{f_{i}})$  \emph{{\footnotesize // Most preferred set of size $q_{i}$ from the remaining objects.}} \\
			$\succ'_{1} \leftarrow \text{append}(\succ'_{1}, Z)$ \emph{{\footnotesize // Append this set to the preference ordering.}} \\
			$Z \leftarrow \emptyset$\\
		}
		\For{$(i \leftarrow 1$ \KwTo $|f|)$}{
			$\succ'_{i} \leftarrow \succ'_{1}$\\
		}
		\textbf{return} $\succ'$.
	\end{algorithm}

\subsection{Proof of Proposition \ref{groupSP}}

\begin{proof}
	It is easy to see that group-strategyproofness implies strategyproofness and non-bossiness. We need to show the converse, that is, if $\pi$ is strategyproof and non-bossy then it is group-strategyproof.\footnote{The proof is inspired by Lemma 1 in \cite{papai2000hierarchical} for single-object allocation and extends the agent allocations to sets of objects.}
	Let $N' \subseteq N$ be a subset of agents, $N' = \{1, \ldots, n'\}$, with $\succ'_{N'}$ such that allocation of some agents in $N'$ strictly improves while for other agents in $N'$ the allocation remains the same. Formally, for all $i\in N'$, $\pi_{i}(\succ'_{N'}, \succ_{-N'}) \succeq_{i} \pi_{i}(\succ)$ and for some $j\in N'$,  $\pi_{j}(\succ'_{N'}, \succ_{-N'}) \succ_{j} \pi_{j}(\succ)$.
	Construct an alternative preference profile $\hat{\succ}$ such that for all $i\in N'$ the preference ordering $\hat{\succ}_{i}$ preserves the ordering but moves the set $\pi_{i}(\succ'_{N'}, \succ_{-N'})$ to the first rank in the ordering.
	
	For agent $1$, if $\pi_{1}(\succ'_{N'}, \succ_{-N'}) \succ_{1} \pi_{1}(\succ)$ then by Lemma~\ref{lem:indep}, $\pi_{1}(\succ'_{N'}, \succ_{-N'})$ is not in the list of available sets. Otherwise, $\pi_{1}(\succ'_{N'}, \succ_{-N'}) = \pi_{1}(\succ)$. Thus, strategyproofness implies that $\pi_{1}(\hat{\succ}_{1}, \succ_{-1}) = \pi_{1}(\succ)$, and by non-bossiness we have $\pi(\hat{\succ}_{1}, \succ_{-1}) = \pi(\succ)$.
	Repeating the same argument for all other agents in $\{2, \ldots, n'\}$, we get $\pi(\hat{\succ}_{N'}, \succ_{-N'}) = \pi(\succ)$. 
	Now since $\pi$ is strategyproof and non-bossy, using Lemma~\ref{lem:equal} we have that $\pi(\hat{\succ}_{N'}, \succ_{-N'}) = \pi(\succ'_{N'}, \succ_{-N'})$. This implies that $\pi(\succ_{N'}, \succ_{-N'}) = \pi(\succ)$, meaning that $\pi$ is group-strategyproof. \qed
\end{proof}

\subsection{Proof of Lemma \ref{lem:SeqOrdering}}

\begin{proof}
	Let $\pi$ be a strategyproof, non-bossy, and Pareto C-efficient mechanism with quota $q$. By Lemma~\ref{lem:ordering}, we know that for each identical preference profile, there exists a fixed ordering $f': (f'_{1}, \ldots, f'_{k})$ such that agent $f'_{1}$ receives $q_{1}$ objects, agent $f'_{2}$ receives $q_{2}$, and so on. 
	Let $f$ be a dictatorship ordering such that $f_{1}, f_{2}(\pi_{f_{1}}(\succ)), \ldots, f_{k}(\pi_{f_{1}}(\succ), \ldots, \pi_{f_{k-1}}(\succ))$. We show that for each ordering of agents, there is an exact mapping from $f'$ to $f$. For all preference profiles, map each agent ordering as follows: $f_{1} =  f'_{1}$, $f_{2}(\pi_{f_{1}}(\succ)) = f'_{2}$, $\ldots$, $f_{k}(\pi_{f_{1}}(\succ), \ldots, \pi_{f_{k-1}}(\succ) = f'_{k}$. This implies that $f$ is a dictatorial ordering, which concludes our existence proof. \qed
\end{proof}

\subsection{Proof of Theorem~\ref{thm:sequentialQD}}

\begin{proof}
	Sequential dictatorship quota mechanisms are strategyproof and non-bossy. Proposition~\ref{prop:seqPareto} states that when preferences are lexicographic sequential dictatorships are Pareto C-efficient. Sequential dictatorships are also Pareto efficient when $C = \sum_{i=1}^{|q|} q_{i}$.
	
	We must show the converse. Let $\pi$ be a strategyproof, Pareto C-efficient, and non-bossy mechanism with quota $q$. By Lemma~\ref{lem:SeqOrdering}, given an identical preference profile and a quota $q$, there exists a sequential ordering $f$ where agent $f_1$ receives $q_{1}$ of her favorite objects from $M$, agent $f_{2}(\pi_{f_{1}}(\succ))$ receives $q_{2}$ of her best objects from $M \setminus\pi_{f_{1}}$, and so on. Therefore, since the choice of the first dictator is independent of preference profile, we can identify a sequential ordering $f_{1}, f_{2}(\pi_{f_{1}}(\succ)), \ldots, f_{k}(\pi_{f_{1}}(\succ), \ldots, \pi_{f_{k-1}}(\succ))$ that receive objects according to $q = (q_1, \ldots, q_{k})$.
	
	Similar to the proof of Theorem \ref{thm:serialDictatorships}, we construct an alternate preference profile $\succ'$, based on the given preference profile, at which all agents have identical preferences, where $\succ' = (\succ'_{i}, \ldots,\succ'_{i})$. 
	
	According to function $f$, the first best $q_{1}$ objects according to $\succ_{f_{1}}$ are denoted by $\pi_{f_{1}}(\succ)$ and are listed as the first objects in $\succ'_{i}$. The next $q_{2}$ objects in $\succ'_{2}$ are the first best $q_{2}$ objects according to $\succ_{f_{2}(\pi_{f_{1}}(\succ))}$ from $M \setminus \pi_{f_{1}}(\succ)$, and so on. In general, for each $i\in N\setminus f_{1}$, the next best $q_{i}$ objects are the best $q_{i}$ objects according to $\succ_{f_{i}(\pi_{f_{1}}(\succ), \ldots, \pi_{f_{i-1}}(\succ))}$ from $M \setminus \bigcup_{j=1}^{j=i-1} \pi_{f_{j}}(\succ)$. These steps are depicted in Algorithm \ref{alg:Seqidentical}.
	
	By Lemma~\ref{lem:indep}, for any agent in $f$ the outcome of $\pi(\succ')$ must remain unchanged if the outcome of all predecessor agents remains unchanged. Thus, by Lemma~\ref{lem:equal}, for any allocation $A \in \mathcal{A}$, if for each agent $i\in N$, $\pi_{i}(\succ') \succeq'_{i} A_{i}$ then we also have $\pi_{i}(\succ') \succeq_{i} A_{i}$. 
	For each $f_{i}(\cdot)$ where $i = 1,\ldots,|f|$, by Lemma~\ref{lem:equal} since $\pi$ is strategyproof and non-bossy, for any allocation $A_{f_{i}}$ given the quota $q$ we have 
	\begin{gather*}
	\pi_{f_{i}(\pi_{f_{1}}(\succ), \ldots, \pi_{f_{i-1}}(\succ))} \succeq'_{f_{i}(\pi_{f_{1}}(\succ), \ldots, \pi_{f_{i-1}}(\succ))} A_{f_{i}(\pi_{f_{1}}(\succ), \ldots, \pi_{f_{i-1}}(\succ))} \\
	\pi_{f_{i}(\pi_{f_{1}}(\succ), \ldots, \pi_{f_{i-1}}(\succ))} \succeq_{f_{i}(\pi_{f_{1}}(\succ), \ldots, \pi_{f_{i-1}}(\succ))} A_{f_{i}(\pi_{f_{1}}(\succ), \ldots, \pi_{f_{i-1}}(\succ))}
	\end{gather*}
	which implies that $\pi(\succ') = \pi(\succ)$. Therefore, we identified an sequential ordering of agents that induces the same outcome as the original mechanism. Thus, $\pi$ is a sequential dictatorship quota mechanism. \qed
\end{proof}

\begin{algorithm}
	\caption{Constructing an identical preference profile}
	\label{alg:Seqidentical}
	\KwData{A preference profile $\succ$, first dictator $f_{1}$, and quota $q$}
	\KwResult{A profile with identical preferences $\succ'$ with $\pi(\succ') = \pi(\succ)$}
	Initialize $\succ_{1} \leftarrow \emptyset$ \\
	Initialize set $Z = \emptyset$\\
	\For{$(i \leftarrow 1$ \KwTo $|q|)$}{ 
		\If{$(i = 1)$}{
			$k \leftarrow f_{1}$ \emph{{\footnotesize // The first dictator is known.}} \\
		}
		\Else{
			$k \leftarrow f_{i}(\pi_{f_{1}}(\succ), \ldots, \pi_{f_{i-1}}(\succ))$ \emph{{\footnotesize // Identify the next dictator}} \\
		}
		$Z \leftarrow \text{top}(q_{i}, \succ_{k})$  \emph{{\footnotesize // Most preferred set of size $q_{i}$ from the remaining objects.}} \\
		$\succ'_{1} \leftarrow \text{append}(\succ'_{1}, Z)$ \emph{{\footnotesize // Append this set to the preference ordering.}} \\
		$Z \leftarrow \emptyset$\\
	}
	\For{$(i \leftarrow 1$ \KwTo $|f|)$}{
		$\succ'_{i} \leftarrow \succ'_{1}$\\
	}
	\textbf{return} $\succ'$.
\end{algorithm}

\subsection{Proof of Theorem \ref{thm:anySD}}

\begin{proof}
	Showing ex post C-efficiency is simple: any serial dictatorship mechanism satisfies Pareto C-efficiency, and thus, any randomization also guarantees a Pareto C-efficient solution ex post.
	The support of the random allocation consists of only strategyproof deterministic allocations, implying that the randomization is also strategyproof. The equal treatment of equal is the direct consequence of the uniform randomization over the set of possible priority orderings. \qed
\end{proof}

\subsection{Proof of Corollary \ref{cor:RSD}}

\begin{proof}
	The conventional RSD mechanism is equivalent to an RSDQ mechanism where agents receive exactly one object, that is, $\sum_{i} q_{i} = m$ and for each agent $i$, $q_{i} = 1$. Therefore, RSD satisfies ex post efficiency, strategyproofness, and envyfreeness. \qed
\end{proof}

\end{document}